\newcommand{\ADVpm}{\mathrm{ADV}^\pm}
\newcommand{\G}{\mathcal{G}}
\newcommand{\V}{\mathcal{V}}
\newcommand{\E}{\mathcal{E}}
\newcommand{\N}{\mathbb{N}}
\long\def\rem#1{}
\def\B{\{0,1\}}
\newcommand{\braket}[2]{\langle#1|#2\rangle}
\newcommand{\norm}[1]{\| #1 \|}
\newcommand{\D}{\mathcal{D}}
\newcommand{\LG}{\mathcal{LG}}
\newtheorem{definition}{Definition}
\newtheorem{theorem}{Theorem}
\newtheorem{lemma}[theorem]{Lemma}
\newcommand{\thmref}[1]{\hyperref[#1]{{Theorem~\ref*{#1}}}}
\newcommand{\lemref}[1]{\hyperref[#1]{{Lemma~\ref*{#1}}}}
\newcommand{\corref}[1]{\hyperref[#1]{{Corollary~\ref*{#1}}}}
\newcommand{\eqnref}[1]{\hyperref[#1]{{Equation~(\ref*{#1})}}}
\newcommand{\claimref}[1]{\hyperref[#1]{{Claim~\ref*{#1}}}}
\newcommand{\remarkref}[1]{\hyperref[#1]{{Remark~\ref*{#1}}}}
\newcommand{\propref}[1]{\hyperref[#1]{{Proposition~\ref*{#1}}}}
\newcommand{\factref}[1]{\hyperref[#1]{{Fact~\ref*{#1}}}}
\newcommand{\defref}[1]{\hyperref[#1]{{Definition~\ref*{#1}}}}
\newcommand{\exampleref}[1]{\hyperref[#1]{{Example~\ref*{#1}}}}
\newcommand{\hypref}[1]{\hyperref[#1]{{Hypothesis~\ref*{#1}}}}
\newcommand{\secref}[1]{\hyperref[#1]{{Section~\ref*{#1}}}}
\newcommand{\chapref}[1]{\hyperref[#1]{{Chapter~\ref*{#1}}}}
\newcommand{\apref}[1]{\hyperref[#1]{{Appendix~\ref*{#1}}}}
\newcommand{\ignore}[1]{}
\begin{document}
\title{Learning graph based quantum query algorithms\\ for finding constant-size subgraphs%
\thanks{Partially supported by the French ANR Defis project
ANR-08-EMER-012 (QRAC)
and
the European Commission IST STREP project
25596 (QCS). Research at the Centre
for Quantum Technologies is funded by the Singapore Ministry of Education 
and the National Research Foundation.}
}

\author[1]{Troy Lee\thanks{\texttt{troyjlee@gmail.com}}}
\author[2]{Fr\'ed\'eric Magniez\thanks{\texttt{frederic.magniez@univ-paris-diderot.fr}}}
\author[1,2]{Miklos Santha\thanks{\texttt{miklos.santha@liafa.univ-paris-diderot.fr}}}
\affil[1]{Centre for Quantum Technologies, National University
    of Singapore, Singapore 117543
    }
\affil[2]
{CNRS, LIAFA, Univ Paris Diderot, Sorbonne Paris Cit\'e, F-75205 Paris, France}
\date{}
\maketitle

\begin{abstract}
Let $H$ be a fixed $k$-vertex graph with $m$ edges and minimum degree $d >0$.  
We use the learning graph framework of Belovs to show that the bounded-error 
quantum query complexity of determining if an $n$-vertex graph contains $H$ as 
a subgraph is $O(n^{2-2/k-t})$, where 
\[
t = \max\left\{\frac{k^2- 2(m+1)}{k(k+1)(m+1)}, \ \frac{2k - d - 3}{k(d+1)(m-d+2)} \right\} > 0 \enspace .
\]
The previous best algorithm of 
Magniez et al.\ had complexity $\widetilde O(n^{2-2/k})$.
\end{abstract}




\section{Introduction}
{\bf Quantum query complexity.}
Quantum query complexity has been a very successful model for studying the power of 
quantum computation.  Important quantum algorithms, in particular the search algorithm 
of Grover~\cite{gro96} and the period finding subroutine of Shor's factoring algorithm~\cite{shor97}, 
can be formulated in this model, yet it is still simple enough that one can often prove 
tight lower bounds.  This model is the quantum analog of deterministic and randomized 
decision tree complexities; the resource measured is the number of queries to the input 
and all other operations are for free.

For promise problems the quantum query complexity can be exponentially smaller
than the classical complexity,
the Hidden Subgroup Problem~\cite{sim97,ehk99} being the most striking example.
The situation is dramatically different for total functions, as
Beals et al.~\cite{bbcmw01} showed that
in this case the deterministic and the quantum
query complexities are polynomially
related. 

One rich source of concrete problems are functions related to properties of graphs.
Graph problems were first studied in the quantum query model by Buhrman et al.~\cite{BCWZ99} 
and later by Buhrman et al.~\cite{bdhhmsw05}, who looked at
Triangle Finding together with Element Distinctness. This was followed by the exhaustive work of 
D\"urr et al.~\cite{dhhm06} 
who investigated many standard graph problems
including Connectivity, Strong Connectivity, Minimum Spanning Tree, and Single Source Shortest Paths.
All these approaches were based on clever
uses of Grover's search algorithm. The groundbreaking work of Ambainis~\cite{amb07}
using quantum walks for Element Distinctness initiated the study  of
quantum walk based search algorithms. Magniez et al.~\cite{MSS07} used this technique to design  
quantum query algorithms for finding constant size subgraphs, and recently 
Childs and Kothari found a novel application of this framework to decide minor-closed graph 
properties~\cite{ck11}.
The results of~\cite{MSS07} imply that a $k$-vertex subgraph can be found with 
$\widetilde{O}(n^{2 - 2/k})$ queries, and moreover
Triangle Finding is solvable with $\widetilde{O}(n^{1.3})$ queries.  Later,
quantum phase estimation techniques~\cite{mnrs11} were also applied to these problems,
and in particular the quantum query complexity of Triangle Finding was improved to 
${O}(n^{1.3})$. The best lower bound known for finding any constant sized subgraph is the trivial 
$\Omega(n)$.

{\bf The general adversary bound and learning graphs.}
Recently, there have been exciting developments leading 
to a characterization of quantum query complexity in terms of a (relatively) simple 
semidefinite program, the general adversary bound~\cite{Reichardt10advtight, lmrss11}.  
Now to design quantum algorithms it suffices to exhibit a solution to this semidefinite 
program.
This plan turns out to be quite difficult as the minimization form of the general adversary 
bound (the easiest form to upper bound) has exponentially many constraints.  Even for 
simple functions it is difficult to directly come up with a feasible solution, much less worry about 
finding a solution with good objective value.  

Belovs~\cite{spanCert} recently introduced the model of learning graphs, which can be viewed as 
the minimization form of the general adversary bound with additional structure imposed on the 
form of the solution.  This additional structure makes learning graphs much easier to reason 
about.  In particular, it ensures that the feasibility 
constraints are {\em automatically} satisfied, allowing one to focus on coming up with a solution 
having a good objective value.
Learning graphs are a very promising model and have already been used to improve the complexity of 
Triangle Finding to $O(n^{35/27})$~\cite{spanCert} and to give an $o(n^{3/4})$ algorithm for $k$-Element Distinctness~\cite{Bel12}, improving the previous bound of $O(n^{k/(k+1)})$ \cite{amb07}.  

{\bf Our contribution.}
We give two learning graph based algorithms for the problem of determining if a graph $G$ contains 
a fixed $k$-vertex subgraph $H$.  Throughout the paper we will assume that $k > 2$, as the problem of determining 
if $G$ contains an edge is equivalent to search.  
We denote by $m$ the number of edges in $H$.  
The first algorithm we give has complexity 
$O(n^{2 - 2/k -t})$ where $t = (k^2- 2(m+1))/(k(k+1)(m+1)) > 0$.  The second algorithm depends on the minimum 
degree of a vertex in $H$.  Say that the smallest degree of a vertex in $H$ is $d >0$.  This is without loss of generality 
as isolated vertices of $H$ can be removed and the theorem applied to the resulting graph $H'$.  The second algorithm 
has complexity $O(n^{2 - 2/k -t})$ where $t = (2k - d - 3)/(k(d+1)(m+2)) > 0$.
Both algorithms thus improve on 
the previous best general subgraph finding algorithm of~\cite{MSS07}, which has complexity 
$\widetilde{O}(n^{2 - 2/k})$.  The first algorithm performs better, for example, on dense regular graphs $H$, while the 
second algorithm performs better on the important case where $H$ is a triangle, having complexity $O(n^{35/27})$,
equal to that of the algorithm of Belovs \cite{spanCert}. 

To explain these algorithms, we first give a high level description of 
the learning graph algorithm in~\cite{spanCert} for Triangle Finding, and its relation to the quantum walk algorithm 
given in~\cite{MSS07}.  The learning graph algorithm in~\cite{spanCert} for Triangle Finding is roughly a 
translation of the quantum walk algorithm on the Johnson graph of~\cite{MSS07} into the
learning graph framework, with one additional twist.  This is to maintain a database not of all edges present in $G$ 
amongst a subset of $r$-vertices but rather a random sample of these edges.  We will refer to this as sparsifying 
the database.  While in the quantum walk world this idea does not help, in the context of learning graphs it leads to a 
better algorithm.  

The quantum walk of \cite{MSS07} works by looking for a subgraph $H' = H \setminus \{v\}$, 
where $v$ is a vertex of minimal degree 
in $H$, and then (using the algorithm for element distinctness) finding the vertex $v$ and the edges linking it to 
$H'$ to form $H$.  Our second learning graph algorithm translates this procedure into the learning graph framework, 
and again applies the trick of sparsifying the database.  Our first algorithm is simpler and translates the quantum 
walk searching for $H$ directly to the learning graph framework, again maintaining a sparsified database.  

The way we apply sparsification 
differs from how it is used in~\cite{spanCert}.  
There every edge slot is taken independently with some fixed probability, while in our case the 
sparse random graphs are chosen uniformly
from a set of structured multipartite graphs whose edge pattern reflects that of the given 
subgraph. The probability space 
evolves during the algorithm, but at every stage the multipartite graphs have a very regular 
degree structure.  
This uniformity of the probability space renders the structure of the learning graph 
very transparent.    

{\bf Related contribution.}
Independently of our work, Zhu~\cite{zhu11} also obtained
\thmref{t:main2}. 
His algorithm is also based on learning graphs, but differs from ours in working with randomly 
sparsified cliques as in the algorithm of Belovs~\cite{spanCert} for Triangle Finding, rather than 
graphs with specified degrees as we do.

\section{Preliminaries}\label{sec:lg}
We denote by $[N]$ the set $\{1,2, \ldots , N\}$.
The \emph{quantum query complexity} of a function $f$, denoted $Q(f)$, is the number of 
input queries needed to evaluate $f$ with error at most $1/3$.  We refer 
the reader to the survey~\cite{hs05} for precise definitions and background.

For a boolean function $f : \D \rightarrow \B$ with $\D \subseteq \{0,1\}^N$, the general adversary 
bound~\cite{HoyerLeeSpalek07negativeadv}, denoted $\ADVpm(f)$, can be defined as follows 
(this formulation was first given in \cite{ReichardtSpan}):
\begin{equation} \label{e:neg_adv_dual}
\begin{aligned}
\ADVpm(f)= \ & \underset{u_{x,i}}{\text{minimize}}
& & \max_{x \in \D} \ \sum_{i \in [N]} \norm{u_{x,i}}^2 \\
& \text{subject to}
& & \sum_{\substack{i \in [N] \\ x_i \ne y_i}} \braket{u_{x,i}}{u_{y,i}} =1 \text{ for all } f(x) \ne f(y)
 \enspace .
\end{aligned}
\end{equation}

As the general adversary bound characterizes quantum query complexity 
\cite{Reichardt10advtight}, 
quantum algorithms can be developed (simply!) by devising solutions to this semidefinite program.  
This turns out not to be so simple, however, as even coming up with 
{\em feasible} solutions to \eqnref{e:neg_adv_dual} is not easy because of the large number of 
strict constraints.  

Learning graphs are a model of computation introduced by Belovs~\cite{spanCert} that give rise to 
solutions of \eqnref{e:neg_adv_dual} and therefore quantum query algorithms.  The model of 
learning graphs is very useful as it ensures that the constraints are satisfied automatically, allowing one to focus 
on coming up with a solution having a good objective value. 

\begin{definition}[Learning graph]
A learning graph $\G$ is a 5-tuple $(\V, \E, w, \ell, \{p_y: y \in Y\})$
where $(\V,\E)$ is a  
rooted, weighted and directed acyclic graph,  the weight function 
$w : \E \rightarrow \mathbb{R}$ maps learning graph 
edges to positive real numbers, the length function $\ell:\E \rightarrow \N$ assigns each edge a 
natural number, and $p_y : \E \rightarrow \mathbb{R}$   
is a unit flow whose source is the root, for every $y \in Y$.
\end{definition}

\begin{definition}[Learning graph for a function]
Let $f : \B^N \rightarrow \B$ be a function.  A learning graph $\G$ for $f$ is a 5-tuple $(\V, \E,S, w, \{p_y : y \in f^{-1}(1)\})$, where 
$S: \V \rightarrow 2^N$ maps $v \in \V$ to a set $S(v)\subseteq [N]$ of variable indices,
and $(\V, \E, w, \ell, \{p_y: y \in f^{-1}(1)\})$ is a learning graph 
for the length function $\ell$ defined as $\ell((u,v)=|S(v) \setminus S(u)|$ for each edge $(u,v)$.
For the root $r \in \V$ we have $S(r) = \emptyset$, 
and every learning graph edge $e=(u,v)$ satisfies $S(u) \subseteq S(v)$.
For each input $y \in f^{-1}(1)$, the set $S(v)$ contains a $1$-certificate for $y$ on $f$, for every sink 
$v \in \V$ of $p_y$. 
\end{definition}

Note that it can be the case for an edge $(u,v)$ that $S(u)=S(v)$ and the length of the edge is zero.
In Belovs \cite{spanCert} what we define here is called a reduced learning graph, and a learning graph is 
restricted to have all edges of length one.  

\begin{definition}[Flow preserving edge sets]
A set of edges $E \subseteq \E$ is {\em flow preserving}, if in the subgraph $G=(V,E)$ 
induced by $E$, for every vertex $v \in V$ which is not a source or a sink in $G$, 
$\sum_{u \in V} p_y((u,v)) = \sum_{w \in V} p_y((v,w))$, for every $y$.
For a flow preserving set of edges $E$ we let $p_y(E)$ denote {\em the value of the flow $p_y$ over $E$}, that is 
$p_y(E) = \sum_{s : \text{source in } G} \sum_{v \in V} p_y((s,v))$. 
\end{definition}

Observe that 
$p_y/p_y(E)$ is a unit flow over $E$ whenever $p_y(E) \ne 0$, and that $p_y(\E) = 1$ for every $y$.
The complexity of a learning graph is defined as follows.
\begin{definition}[Learning graph complexity] \label{e:lg_cost}
Let $\G$ be a learning graph, and let $E \subseteq \E$ a set of flow preserving learning graph edges.
The negative complexity of $E$ is $C_0(E)=\sum_{e \in E} \ell(e) w(e)$. The positive complexity
of $E$ under the flow $p_y$ is 
\begin{equation*}
C_{1,y}(E)= 
\sum_{e \in E}  \frac{\ell(e)}{w(e)}\Big(\frac{p_y(e)}{ p_y(E)}\Big)^2, \text{ if $p_y(E) > 0$,\quad and $0$ otherwise.}
\end{equation*}
The positive complexity of $E$ is $C_1(E) =\max_{y \in Y} C_{1,y}(E).$
The complexity of $E$ is $C(E)=\sqrt{C_0(E) C_1(E)}$, and 
the learning graph complexity of $\G$ is $C(\G)= C(\E)$.  
The learning graph 
complexity of a function $f$, denoted $\LG(f)$, is the minimum learning graph complexity of 
a learning graph for $f$.
\end{definition}

The usefulness of learning graphs for quantum query complexity is given by the following theorem.
\begin{theorem}[Belovs] \label{t:lg_alg} 
$Q(f)=O(\LG(f))$.  
\end{theorem}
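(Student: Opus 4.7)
The plan is to exhibit, given a learning graph $\G$ for $f$, a feasible solution to the general adversary SDP in \eqnref{e:neg_adv_dual} whose objective value is at most $C(\G)$; applying Reichardt's theorem $Q(f) = \Theta(\ADVpm(f))$ then yields $Q(f) = O(\LG(f))$. A preliminary reduction simplifies life: any edge $e = (u,v) \in \E$ with $S(v) \setminus S(u) = \{i_1, \ldots, i_\ell\}$ can be subdivided into a path $u = w_0 \to w_1 \to \cdots \to w_\ell = v$ with $S(w_j) = S(u) \cup \{i_1, \ldots, i_j\}$, each new edge carrying weight $w(e)$ and flow $p_y(e)$. The identities $\sum_{j=1}^{\ell} w(e) = \ell(e) w(e)$ and $\sum_{j=1}^{\ell} p_y(e)^2/w(e) = \ell(e) p_y(e)^2 / w(e)$ show that both $C_0$ and each $C_{1,y}$ are preserved, so we may assume every edge of $\G$ has length one, with $S(v_e) \setminus S(u_e) = \{i(e)\}$ for a unique index.

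For this simplified $\G$, introduce an orthonormal basis vector $\ket{e, a}$ for each edge $e = (u,v) \in \E$ and each partial assignment $a \in \B^{S(u)}$. Fix a parameter $\lambda > 0$ to be optimized at the end. For $x$ with $f(x) = 0$ and each $i \in [N]$, set
\[
u_{x,i} = \lambda \sum_{e=(u,v) \,:\, i(e) = i} \sqrt{w(e)}\, \ket{e, x|_{S(u)}},
\]
and for $y$ with $f(y) = 1$,
\[
u_{y,i} = \lambda^{-1}\sum_{e=(u,v) \,:\, i(e) = i} \frac{p_y(e)}{\sqrt{w(e)}}\, \ket{e, y|_{S(u)}}.
\]
Orthogonality of the basis gives $\sum_i \norm{u_{x,i}}^2 = \lambda^2 \sum_e w(e) = \lambda^2 C_0(\E)$ and $\sum_i \norm{u_{y,i}}^2 = \lambda^{-2}\sum_e p_y(e)^2/w(e) \le \lambda^{-2} C_1(\E)$, where I used that $p_y(\E) = 1$. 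Choosing $\lambda^2 = \sqrt{C_1(\E)/C_0(\E)}$ balances both maxima to $\sqrt{C_0 C_1} = C(\G)$.

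It remains to verify the inner-product constraint. For $f(x) = 0$ and $f(y) = 1$, the factors of $\lambda$ cancel and the left-hand side expands to
\[
\sum_{e=(u,v) \in \E} p_y(e) \cdot [x|_{S(u)} = y|_{S(u)}] \cdot [x_{i(e)} \ne y_{i(e)}].
\]
Decomposing $p_y$ as a sum $\sum_P f_P \chi_P$ over root-to-sink paths, each path $P$ contributes the number of its edges on which both indicators are on. Traversing $P$ from the root (where $S(r) = \emptyset$, so $x$ and $y$ trivially agree) to a sink (whose $S$ contains a $1$-certificate for $y$, which $x$ cannot satisfy because $f(x) = 0$), there is a unique first edge $e^*_P$ loading a disagreement: earlier edges on $P$ fail the second indicator and later edges fail the first. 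Hence each path contributes exactly $1$, and the total is $\sum_P f_P = 1$. This path-decomposition argument is the conceptual crux, and it is precisely what justifies the length-one reduction of the first step: if a single edge were allowed to load several disagreement indices at once, each path would contribute strictly more than one and the adversary constraint would break. Once the atomization is in place, the verification is a clean telescoping along the flow.
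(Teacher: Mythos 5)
The paper states this theorem without proof, attributing it to Belovs; your argument is a correct, self-contained reconstruction of exactly the standard proof that the paper implicitly invokes. The three ingredients you use — (i) the atomization of each edge into length-one pieces, which preserves both $C_0(\E)$ and each $C_{1,y}(\E)$ by the telescoping identities you wrote down; (ii) the dual-adversary vectors on an orthonormal basis indexed by (edge, partial assignment), weighted by $\sqrt{w(e)}$ on negative inputs and $p_y(e)/\sqrt{w(e)}$ on positive inputs, balanced by the free scaling $\lambda$; and (iii) feasibility via the flow decomposition of $p_y$ into root-to-sink paths combined with the ``first disagreement'' observation, which relies on $S(\mathrm{root})=\emptyset$, the monotonicity $S(u)\subseteq S(v)$ along edges, and the $1$-certificate property at sinks — are precisely the ingredients of the construction in Belovs' paper (presented there in span-program language, equivalent by duality to the dual-adversary form in \eqnref{e:neg_adv_dual}). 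One small imprecision: the subdivision step as stated does not apply to edges with $\ell(e)=0$, so the claim ``we may assume every edge has length one'' should read ``length zero or one''; this is harmless because zero-length edges carry no index $i(e)$, contribute nothing to $C_0$, $C_{1,y}$, or the feasibility sum, and the first-disagreement argument skips over them. You also tacitly use that any unit flow on a rooted DAG decomposes as a convex combination of root-to-sink path flows, which holds here because a learning graph is by definition acyclic with a unique source. Neither point affects correctness, and your closing remark correctly identifies the length-one reduction as the thing that makes each path contribute exactly $1$ rather than $\ell(e^*_P)$.
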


We study functions $f: \B^{n \choose 2} \rightarrow \B$ whose input is an undirected $n$-vertex graph.
We will refer to the vertices and edges of the learning graph 
as $L$-vertices and $L$-edges so as not to cause confusion with the vertices/edges of the input graph.
Furthermore, we will only consider learning graphs where every $L$-vertex is labeled by a $k$-partite 
undirected graph on $[n]$, where $k$ is some fixed positive integer. Different $L$-vertices will have different labels, 
and we will identify an $L$-vertex with its label.

\section{Analysis of learning graphs}\label{sec:lemma}
We first review some tools developed by Belovs to analyze the complexity of learning graphs and then develop some 
new ones useful for the learning graphs we construct.  We fix for this section a learning graph 
$\G = (\V, \E, w, \ell, \{p_y\})$. 
By  {\em level $d$} of $\G$ we refer to the set of vertices at distance 
$d$ from the root.  A {\em stage} is the set of edges of $\G$ between level $i$ and level $j$, for some $i<j$.  
For a subset $V \subseteq \V$ of the $L$-vertices let $V^+ = \{ (v,w) \in \E : v \in V\}$ and similarly let 
$V^- = \{ (u,v) \in\E : v \in V\}$.  For a vertex $v$ we will write $v^+$ instead of ${\{v\}}^+$, and similarly for $v^-$ 
instead of ${\{v\}}^-$.  Let $E$ be a stage of $\G$ and let $V$ be some subset of  the $L$-vertices at the beginning 
of the stage.  
We set $E_V^{\rightarrow} = \{ (v,w) \in E : \text{$v$ is $u$ or a descendent of $u$ for some $u \in V$}\}$.
For a vertex $v$ we will write $E_v^{\rightarrow}$ instead of $E_{\{v\}}^{\rightarrow}$.

Given a learning graph $\G$, 
the easiest way to obtain another learning graph 
is to modify the
weight function of $\G$. We will often use this reweighting scheme to obtain learning graphs with better complexity 
or complexity that is more convenient to analyze.
When $\G$ is understood from the context, and when $w'$ is the new weight function, for any subset $E \subseteq \E$
of the $L$-edges, we denote the complexity of $E$ with respect to $w'$ by $C^{w'}(E)$.

An illustration of the reweighting method is the following lemma of Belovs which states that 
we can upper bound the complexity of a learning graph by partitioning it into a 
constant number of stages and summing the complexities of the stages.  

\begin{lemma}[Belovs]\label{l:stages}
If $\E$ can be partitioned into a constant number $k$ of stages $E_1, \ldots, E_k$, 
then there exists a weight function $w'$ such that
$C^{w'}(\G)=O(C(E_1) + \ldots + C(E_k))$.
\end{lemma}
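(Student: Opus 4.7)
The plan is to obtain $w'$ by rescaling within each stage: for each $i$ pick a positive constant $\alpha_i$ and set $w'(e) = \alpha_i\, w(e)$ for every $e \in E_i$. Since the flows $p_y$ are part of the learning graph's data and are unchanged by reweighting, the resulting 5-tuple is still a valid learning graph. Within a single stage this rescaling leaves $C(E_i)$ invariant: $C_0(E_i)$ scales linearly in $w$, while $\ell(e)/w(e)$ appearing in $C_1(E_i)$ scales inversely, so the product $C_0(E_i) C_1(E_i)$ and hence $C(E_i)$ is preserved. Thus the $\alpha_i$ are free parameters that I will fix at the end to balance the two halves of the global complexity.

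Next I would compute the two global complexities of $\G$ with respect to $w'$. The negative complexity decomposes immediately as $C_0^{w'}(\G) = \sum_i \alpha_i\, C_0(E_i)$. For the positive complexity, since $p_y(\E) = 1$, I unfold the definition and regroup edges by stage to obtain
\[
C_{1,y}^{w'}(\G) = \sum_i \alpha_i^{-1}\, p_y(E_i)^2\, C_{1,y}(E_i),
\]
and then use $p_y(E_i) \le 1$, which follows because $p_y$ is a unit flow from the root and each stage is a cut between two level-sets of the DAG, so the flow across it is at most the total flow $1$. Taking $\max_y$ on both sides and pushing it inside the finite sum gives $C_1^{w'}(\G) \le \sum_i \alpha_i^{-1}\, C_1(E_i)$.

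Combining the two bounds, $C^{w'}(\G) \le \sqrt{\bigl(\sum_i \alpha_i C_0(E_i)\bigr)\bigl(\sum_i \alpha_i^{-1} C_1(E_i)\bigr)}$. Choosing $\alpha_i = \sqrt{C_1(E_i)/C_0(E_i)}$ balances the two factors so that each sum equals $\sum_i C(E_i)$, yielding $C^{w'}(\G) \le \sum_{i=1}^k C(E_i) = O(C(E_1)+\cdots+C(E_k))$ as required. The main step to get right is the flow bookkeeping, namely the appearance of $p_y(E_i)^2$ in the regrouping and the bound $p_y(E_i) \le 1$; degenerate cases such as $C_0(E_i)=0$ or $C_1(E_i)=0$ are handled by taking limits on $\alpha_i$ (using the convention that $C_{1,y}(E_i)=0$ whenever $p_y(E_i)=0$), and the fact that $k$ is a constant only enters through the implicit constant in the big-$O$.
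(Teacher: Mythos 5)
The paper cites this lemma to Belovs and gives no proof, so there is no in-paper argument to compare against; judged on its own, your reconstruction is correct and is the standard argument. The key steps all check out: the per-stage rescaling $w'(e)=\alpha_i w(e)$ leaves each $C(E_i)$ invariant and keeps $w'$ positive; $C_0^{w'}(\G)=\sum_i\alpha_i C_0(E_i)$ is immediate; the regrouping $C_{1,y}^{w'}(\G)=\sum_i\alpha_i^{-1}p_y(E_i)^2 C_{1,y}(E_i)$ uses $p_y(\E)=1$ correctly (and matches the convention that the $i$-th summand is $0$ when $p_y(E_i)=0$); the bound $p_y(E_i)\le 1$ holds since a stage is a level cut of a unit flow from the root (the paper even asserts $p_y(E)=1$ for a stage in the proof of \lemref{lem:conv_comb}); pulling $\max_y$ inside the finite sum is valid; and the balancing choice $\alpha_i=\sqrt{C_1(E_i)/C_0(E_i)}$ gives $C^{w'}(\G)\le\sum_i C(E_i)$, which is sharper than the stated $O(\cdot)$ and in fact does not use that $k$ is constant. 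This is the same reweighting style the paper does prove in \lemref{lem:conv_comb}, and is, to my knowledge, Belovs' original argument. Your handling of the degenerate cases $C_0(E_i)=0$ or $C_1(E_i)=0$ is the right instinct: such stages contribute $0$ to the target sum, and choosing $\alpha_i$ appropriately extreme makes their contribution to $C^{w'}(\G)$ arbitrarily small, which is more than enough for the claimed big-$O$.
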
  

Now we will focus on evaluating the complexity of a stage.  Belovs has given a general theorem to simplify 
the calculation of the complexity of a stage for flows with a high degree of symmetry 
(Theorem 6 in \cite{spanCert}).  Our flows will possess this symmetry but rather than apply Belovs' theorem, 
we develop one from scratch that takes further advantage of the regular structure of our learning graphs.  


\begin{definition}[Consistent flows] 
Let $E$ be a stage of $\G$ 
and let $V_1, \ldots, V_s$ be a partition of
the $L$-vertices at the beginning of the stage.
We say that $\{p_y\}$ is consistent with $E_{V_1}^{\rightarrow} , \ldots, E_{V_s}^{\rightarrow} $ if  
$p_y(E_{V_i}^{\rightarrow} )$ is independent of $y$ for each $i$.

\end{definition}

\begin{lemma}
\label{lem:conv_comb}
Let $E$ be a stage of $\G$ 
and let $V_1, \ldots, V_s$ be a partition of
the $L$-vertices at the beginning of the stage.  Set $E_i = E_{V_i}^{\rightarrow} $, and 
suppose that $\{p_y\}$ is consistent with $E_1, \ldots, E_s$. 
Then there is a new weight function $w'$ for $\G$ such that $$C^{w'}(E) \leq \max_i C(E_i).$$
\end{lemma}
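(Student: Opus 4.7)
The plan is to rescale the weight function by a constant factor $\alpha_i>0$ on each block $E_i$, chosen so that the rescaled negative and positive complexities of $E$ are each convex combinations of the original $C(E_i)$'s. Concretely, define
\[
  w'(e) = \alpha_i\, w(e) \quad \text{for } e \in E_i,
\]
and leave $w'$ equal to $w$ elsewhere in $\G$. Let $q_i := p_y(E_i)$, which is well-defined (independent of $y$) by consistency, and set $Q := \sum_i q_i$. Since the $V_i$ partition the starting $L$-vertices of the stage, the $E_i$ partition $E$, so $p_y(E) = Q$ for every $y$.

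Next, I would unwind the definitions. On the negative side,
\[
  C_0^{w'}(E) = \sum_i \alpha_i\, C_0(E_i).
\]
On the positive side, rewriting the flow normalization of $C_{1,y}$ from $p_y(E)=Q$ back to $p_y(E_i)=q_i$ gives
\[
  C_{1,y}^{w'}(E) = \sum_i \frac{1}{\alpha_i}\cdot\frac{q_i^2}{Q^2}\,C_{1,y}(E_i)
  \;\leq\; \sum_i \frac{1}{\alpha_i}\cdot\frac{q_i^2}{Q^2}\,C_1(E_i),
\]
where in the last step we used $C_{1,y}(E_i)\le C_1(E_i)$ before taking the max over $y$.

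The balancing choice is the AM--GM-optimal one,
\[
  \alpha_i \;=\; \frac{q_i}{Q}\sqrt{\frac{C_1(E_i)}{C_0(E_i)}},
\]
(when $C_0(E_i)$ or $C_1(E_i)$ vanish the block contributes nothing and can be handled separately). With this choice, a direct computation yields
\[
  \alpha_i\, C_0(E_i) \;=\; \frac{q_i}{Q}\, C(E_i),
  \qquad
  \frac{1}{\alpha_i}\cdot\frac{q_i^2}{Q^2}\,C_1(E_i) \;=\; \frac{q_i}{Q}\, C(E_i).
\]
Summing over $i$ shows that both $C_0^{w'}(E)$ and $C_1^{w'}(E)$ are bounded by the convex combination $\sum_i (q_i/Q)\,C(E_i) \le \max_i C(E_i)$. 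Therefore
\[
  C^{w'}(E) = \sqrt{C_0^{w'}(E)\,C_1^{w'}(E)} \;\leq\; \max_i C(E_i),
\]
as claimed.

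The only subtle point I anticipate is bookkeeping the normalization factor $p_y(E)$ versus $p_y(E_i)$ in the positive complexity: the identity $\sum_{e\in E_i}\frac{\ell(e)}{w(e)}p_y(e)^2 = q_i^2\,C_{1,y}(E_i)$ is what connects the two, and consistency is used precisely so that the scalars $q_i$ (and hence the choice of $\alpha_i$) do not depend on $y$. Once this is in hand, the proof reduces to the AM--GM balancing above.
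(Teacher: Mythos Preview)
Your proof is correct and follows essentially the same approach as the paper: rescale the weights by a block-dependent constant on each $E_i$, then use consistency to pull the $\max_y$ inside the sum over blocks. The only difference is cosmetic: the paper chooses the rescaling factor $q_i\,C_1(E_i)$, which forces $C_1^{w'}(E)\le 1$ and $C_0^{w'}(E)\le \max_i C(E_i)^2$, whereas your AM--GM choice $q_i\sqrt{C_1(E_i)/C_0(E_i)}$ symmetrically bounds both $C_0^{w'}(E)$ and $C_1^{w'}(E)$ by the same convex combination $\sum_i (q_i/Q)\,C(E_i)$; either way the product is at most $\max_i C(E_i)^2$.
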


\begin{proof}
Since by hypothesis $p_y(E_i)$ is independent from $y$, denote it by $\alpha_i$.  We assume that $\alpha_i >0$ 
for each $i$; if $\alpha_i =0$ then $p_y((u,v))=0$ for every $y$ and $(u,v) \in E_i$, and these edges can be 
deleted from the graph without affecting anything.  
For $e \in E_i$, we define the new weight $w'(e) = \alpha_i C_1(E_i)w(e)$.  Let us analyze the complexity of $E$ 
under this weighting.  

To evaluate the positive complexity observe that $p_y(E) =1$ for every $y$, since $E$ is a stage,
and thus $\sum_i \alpha_i = 1$. Therefore
\[
C_1^{w'}(E) =
\max_y \sum_i \sum_{e \in E_i}  \frac{\ell(e) p_y(e)^2}{w'(e)} \leq
\sum_i \frac{\alpha_i}{C_{1}(E_i)} \max_y \sum_{e \in E_i}  \frac{\ell(e) p_y(e)^2}{ w(e) \alpha_i^2} 
= \sum_i \alpha_i = 1.
\]

The negative complexity can be bounded by
\[
C_0(E) =
\sum_i \sum_{e \in E_i} \ell(e) w'(e) = 
\sum_i \alpha_i C_1(E_i) \sum_{e \in E_i}\ell(e)  w(e) = 
\sum_i \alpha_i C_1(E_i) C_0(E_i) \le 
\max_i C(E_i)^2.
\]
\end{proof}

At a high level, we will analyze the complexity of a stage $E$ as follows.  First, we partition the set of vertices $V$ 
into equivalence classes $[u]= \{\sigma(u): \sigma \in S_n\}$
for some appropriate action of $S_n$ that we will define later,
and use symmetry to argue that the flow is consistent with
$\{ E_{[u]}^{\rightarrow} \}$.  Thus by \lemref{lem:conv_comb}, it is enough to focus 
on the maximum complexity of $E_{[u]}^{\rightarrow}$. 
Within $E_{[u]}^{\rightarrow}$, our flows will be of a 
particularly simple form.  In particular, incoming flow will be uniformly distributed over a subset of $[u]$ of 
fixed size independent of $y$.  The next two lemmas evaluate the complexity of $E_{[u]}^{\rightarrow}$ in 
this situation.

\begin{lemma}\label{e:simple_cost}
Let $E$ be a stage of $\G$ and let $V$ be some subset of  the $L$-vertices at the beginning of the stage.  
For each $y$ let $W_y \subseteq V$ be the set of vertices in $V$ which receive positive flow under $p_y$.
Suppose that for every $y$ the following is true:
\begin{enumerate}
\item
$E_u^{\rightarrow} \cap E_v^{\rightarrow} = \emptyset$ for $u \neq v \in V$,
\item
$|W_y| $ is independent of $y$, 
\item
for all $v \in W_y$ we have $p_y(E_v^{\rightarrow}) = p_y(E_V^{\rightarrow})/|W_y|$.
\end{enumerate}
Then
$$
C(E_V^{\rightarrow}) \leq \sqrt{\max_{v \in V}C_0(E_v^{\rightarrow}) \max_{v \in V}C_1(E_v^{\rightarrow}) \frac{|V|}{|W_y|}} \enspace .
$$
\end{lemma}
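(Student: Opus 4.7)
The plan is to bound the negative and positive complexities of $E_V^{\rightarrow}$ separately, using each of the three hypotheses in turn, and then combine them via $C=\sqrt{C_0 C_1}$. Unlike \lemref{lem:conv_comb}, no reweighting of $\G$ is needed: the hypotheses are engineered so that the naive decomposition already yields the stated bound.

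For the negative complexity, the disjointness hypothesis~(1) tells us that $\{E_v^{\rightarrow}\}_{v \in V}$ partitions $E_V^{\rightarrow}$, so
$$C_0(E_V^{\rightarrow}) = \sum_{v \in V} C_0(E_v^{\rightarrow}) \leq |V| \cdot \max_{v \in V} C_0(E_v^{\rightarrow}).$$
For the positive complexity, I fix $y$ (assuming $p_y(E_V^{\rightarrow})>0$; otherwise $C_{1,y}=0$ trivially) and restrict the sum defining $C_{1,y}(E_V^{\rightarrow})$ to $v \in W_y$, since $p_y$ vanishes on $E_v^{\rightarrow}$ for $v \notin W_y$. Hypothesis~(3) lets me rewrite
$$\frac{p_y(e)}{p_y(E_V^{\rightarrow})} = \frac{1}{|W_y|} \cdot \frac{p_y(e)}{p_y(E_v^{\rightarrow})} \qquad (e \in E_v^{\rightarrow},\ v \in W_y),$$
which identifies the inner sum over each $E_v^{\rightarrow}$ as $C_{1,y}(E_v^{\rightarrow})/|W_y|^2$. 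Summing over the $|W_y|$ pieces and using hypothesis~(2) to make $|W_y|$ independent of $y$ gives
$$C_1(E_V^{\rightarrow}) \leq \frac{\max_{v \in V} C_1(E_v^{\rightarrow})}{|W_y|}.$$

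Multiplying the two bounds and taking the square root produces the claimed inequality. I do not anticipate any serious obstacle: the three hypotheses conspire precisely to decouple the complexity of $E_V^{\rightarrow}$ into a purely combinatorial factor $|V|/|W_y|$ times the worst-case single-vertex complexities. The only step worth verifying explicitly is the identity $p_y(E_V^{\rightarrow}) = |W_y|\cdot p_y(E_v^{\rightarrow})$ for $v \in W_y$, which follows immediately from hypothesis~(3) combined with disjointness~(1), since $p_y(E_V^{\rightarrow}) = \sum_{v \in W_y} p_y(E_v^{\rightarrow})$ by the partition and then each summand equals $p_y(E_V^{\rightarrow})/|W_y|$.
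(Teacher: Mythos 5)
Your proof is correct and follows essentially the same route as the paper: bound $C_0(E_V^{\rightarrow})$ by disjointness and $C_1(E_V^{\rightarrow})$ by restricting to $W_y$, substituting $p_y(E_V^{\rightarrow}) = |W_y|\, p_y(E_v^{\rightarrow})$ from hypothesis~(3), and pulling out the $1/|W_y|^2$ factor. In fact your write-up is slightly cleaner than the paper's, which contains a stray $\omega^2$ where $|W_y|^2$ is meant.
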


\begin{proof}
The negative complexity can easily be upper bounded by
$$
C_0(E_V^{\rightarrow}) = \sum_{v \in V}C_0(E_v^{\rightarrow}) \leq |V| \max_{v \in V}C_0(E_v^{\rightarrow}).
$$
For the positive complexity we have 
\begin{eqnarray*}
C_1(E_V^{\rightarrow}) &=&
\max_y \sum_{v \in W_y} \sum_{e \in E_v^{\rightarrow}}  \frac{\ell(e) p_y(e)^2}{w(e)p_y(E_V^{\rightarrow})^2} \\
&\leq&
\frac{1}{|W_y|^2} \sum_{v \in W_y}  \max_y  \sum_{e \in E_v^{\rightarrow}}   \frac{\ell(e) p_y(e)^2 \omega^2}{w(e)p_y(E_V^{\rightarrow})^2} \\
&\leq&
\frac{\max_{v \in V}C_1(E_v^{\rightarrow})}{|W_y|}.
\end{eqnarray*}
\end{proof}

Observe that when $E$ is a stage between two consecutive levels, that is between level $i$ and $i+1$ for some $i$, 
and $V$ is a subset of the vertices at the
beginning of the stage, then $E_V^{\rightarrow} = V^+$. 
We will use \lemref{lem:conv_comb} in conjunction with \lemref{e:simple_cost} first in this context.
\begin{lemma}
\label{e:simple_costg}
Let $E$ be a stage of $\G$ between two consecutive levels.  Let $V$ be the set of $L$-vertices at the 
beginning of the stage and suppose that each $v \in V$ has outdegree $d$ and all $L$-edges $e$ of the stage 
satisfy $w(e)=1$ and $\ell(e) \le \ell$.  
Let $V_1, \ldots, V_s$ be a partition of~$V$, 
and for all $y$ and $i$,
let $W_{y,i} \subseteq V_i$ be the set of vertices in $V_i$
which receive positive flow under $p_y$.
Suppose that 
\begin{enumerate}
\item the flows $\{p_y\}$ are consistent with $\{{V_i}^+\}$,
\item  
$|W_{y,i}| $ is independent from $y$ for every $i$, and for all $v \in W_{y,i}$ we have $p_y(v^+) = p_y({V_i}^+)/|W_{y,i}|$,
\item there is a $g$ such that for each vertex $v \in W_{y,i}$ the flow is directed uniformly to $g$ of the $d$ many 
neighbors.  
\end{enumerate}
Then there is a new weight function $w'$ such that
\begin{equation}
\label{e:cost_formula}
C^{w'}(E) \leq   \max_i  \ell \sqrt{\frac{d}{g} \frac{ |V_i|}{|W_{y,i}|}} \enspace .
\end{equation}
\end{lemma}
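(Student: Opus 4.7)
The plan is to reduce the claim to one application of \lemref{lem:conv_comb} on the outer partition $V_1,\dots,V_s$, followed by one application of \lemref{e:simple_cost} inside each block $V_i$. Condition~1 of the lemma says that $\{p_y\}$ is consistent with $\{V_i^+\}$, which is precisely the hypothesis of \lemref{lem:conv_comb}. So there exists a reweighting $w'$ of $\G$ with
\[
C^{w'}(E) \;\le\; \max_i C(V_i^+),
\]
the right-hand complexities being evaluated with respect to the original weights $w$, which on this stage are all equal to $1$. Thus it suffices to prove $C(V_i^+) \le \ell\sqrt{(d/g)\,|V_i|/|W_{y,i}|}$ for each $i$.

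To bound $C(V_i^+)$ I would apply \lemref{e:simple_cost} with $V := V_i$ and $W_y := W_{y,i}$. Its first hypothesis, disjointness of the sets $v^+$ for distinct $v\in V_i$, is automatic because $E$ sits between two consecutive levels and each $L$-edge has a unique tail. Its remaining two hypotheses, namely that $|W_{y,i}|$ is independent of $y$ and that $p_y(v^+) = p_y(V_i^+)/|W_{y,i}|$ for $v \in W_{y,i}$, are exactly condition~2 of the current lemma. \lemref{e:simple_cost} then gives
\[
C(V_i^+) \;\le\; \sqrt{\max_{v \in V_i} C_0(v^+) \cdot \max_{v \in V_i} C_1(v^+) \cdot \frac{|V_i|}{|W_{y,i}|}}\,.
\]

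The two per-vertex complexities are routine. Each $v$ has outdegree $d$ and every $L$-edge leaving $v$ has weight $1$ and length at most $\ell$, so $C_0(v^+) \le d\ell$. For $C_1(v^+)$, condition~3 says that whenever $v \in W_{y,i}$ the flow $p_y(v^+)$ is spread uniformly over exactly $g$ of the $d$ outgoing edges, so $p_y(e)/p_y(v^+) \in \{0, 1/g\}$, yielding
\[
C_{1,y}(v^+) \;=\; \sum_{e \in v^+} \frac{\ell(e)}{w(e)} \left(\frac{p_y(e)}{p_y(v^+)}\right)^2 \;\le\; g \cdot \ell \cdot \frac{1}{g^2} \;=\; \frac{\ell}{g}\,.
\]
Substituting these two bounds into the previous display gives $C(V_i^+) \le \ell\sqrt{(d/g)\,|V_i|/|W_{y,i}|}$, and combining with the outer bound $C^{w'}(E)\le\max_i C(V_i^+)$ yields \eqnref{e:cost_formula}.

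There is no real obstacle: the lemma is essentially a clean composition of \lemref{lem:conv_comb} and \lemref{e:simple_cost}, with the three hypotheses of the present statement matching, in order, the hypotheses of those two lemmas plus the uniform-out-flow structure that pins down $C_1(v^+)$. The only bookkeeping point worth emphasizing is that the per-block quantities $C_0(v^+)$ and $C_1(v^+)$ are computed under the original weights $w\equiv 1$, not under the reweighting produced by \lemref{lem:conv_comb}; since that outer reweighting multiplies all edges of a given $V_i^+$ by a common scalar, it does not enter the inner application of \lemref{e:simple_cost}.
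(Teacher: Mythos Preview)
Your proposal is correct and follows essentially the same approach as the paper: apply \lemref{lem:conv_comb} via hypothesis~(1), then \lemref{e:simple_cost} via hypothesis~(2), and finally compute $C_0(v^+)\le d\ell$ and $C_1(v^+)\le \ell/g$ via hypothesis~(3). Your write-up is in fact more careful than the paper's in two places: you explicitly verify the disjointness hypothesis of \lemref{e:simple_cost} (trivial for consecutive levels), and you correctly write $C_0(v^+)\le d\ell$ rather than equality, since the statement only assumes $\ell(e)\le\ell$.
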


\begin{proof}
By hypothesis~(1) we are in the 
realm of \lemref{lem:conv_comb} and therefore $C^{w'}(E) \leq \max_i C({V_i}^+)$.
To evaluate $C({V_i}^+)$, we can apply \lemref{e:simple_cost} according to hypothesis~(2).
The statement of the lemma then follows, since for every $v \in V$ we have 
$C_0(v^+) = \ell d$, and $C_1(v^+) = \ell/g$ by hypothesis~(3).
\end{proof}

This lemma will be the main tool we use to analyze the complexity of stages.  Note that the complexity in 
\eqnref{e:cost_formula} can
be decomposed into three parts: the length $\ell$, the degree ratio $d/g$, and 
the {\em maximum vertex ratio} $\max_i |V_i|/|W_{y,i}|$.  This terminology will be very helpful to 
evaluate the complexity of stages.  

We will use symmetry to decompose our flows as a convex combinations of uniform flows over disjoint sets of 
edges.  Recall that each $L$-vertex $u$ is labeled by a $k$-partite graph on $[n]$, say with color classes 
$A_1, \ldots, A_k$, and that we identify an $L$-vertex with its label.  For $\sigma \in S_n$ we define the action of 
$\sigma$ on $u$ as $\sigma(u) = v$, where $v$ is a $k$-partite graph with color classes 
$\sigma(A_1), \ldots, \sigma(A_k)$ and edges $\{\sigma(i), \sigma(j)\}$ for every edge $\{i,j\}$ in $u$.  

Define an equivalence class $[u]$ of $L$-vertices by $[u] = \{ \sigma(u): \sigma \in S_n\}$.  We say that $S_n$ {\em acts 
transitively} on flows $\{p_y\}$ if for every $y, y'$ there is a $\tau \in S_n$ such that $p_y((u,v))=p_{y'}((\tau(u), \tau(v))$ 
for all $L$-edges $(u,v)$. 

As shown in the next lemma, if $S_n$ acts transitively on a set of flows $\{p_y\}$ then they are consistent with 
${[v]}^+$, where $v$ is a vertex at the beginning of a stage between consecutive levels. 
This will set us up to satisfy hypothesis~(1) of \lemref{e:simple_costg}.
\begin{lemma}
\label{lem:trans_cons}
Consider a learning graph $\G$ and a set of flows $\{p_y\}$ such that $S_n$ acts transitively on $\{p_y\}$.  
Let $V$ be the set of $L$-vertices of $\G$ at some given level.
Then $\{p_y\}$ is consistent with $\{[u]^+ : u \in V\}$, and, similarly, $\{p_y\}$ is consistent with $\{[u]^- : u \in V\}$.
\end{lemma}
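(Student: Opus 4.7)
The plan is to establish both consistency statements directly by transporting flows along the permutation furnished by transitivity; outgoing and incoming edges will be handled by essentially the same argument, so I will focus on $[u]^+$ and then invoke symmetry for $[u]^-$.

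Fix an $L$-vertex $u$ at the given level and two positive inputs $y, y'$. By transitivity, pick $\tau \in S_n$ with $p_y((v,w)) = p_{y'}((\tau(v), \tau(w)))$ for every $L$-edge $(v, w)$. The key observation is that $\tau$ preserves $[u]$ setwise: since $[u] = \{\sigma(u) : \sigma \in S_n\}$ and the map $\sigma \mapsto \sigma(u)$ defines a left $S_n$-action, one has $\tau \cdot [u] = \{(\tau\sigma)(u) : \sigma \in S_n\} = [u]$. Consequently $(v, w) \mapsto (\tau(v), \tau(w))$ is a bijection of $[u]^+$ onto itself. Summing the flow identity over $(v, w) \in [u]^+$ and reindexing the right-hand side via this bijection yields $p_y([u]^+) = p_{y'}([u]^+)$; since $y, y'$ were arbitrary, $p_y([u]^+)$ is independent of $y$, which is exactly the consistency claim. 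For $[u]^-$ I run the identical calculation: membership in $[u]^-$ is a condition on the target rather than the source, and $\tau$ still permutes $[u]$, so $(v, w) \mapsto (\tau(v), \tau(w))$ again restricts to a bijection of $[u]^-$ onto itself and the same summation gives $p_y([u]^-) = p_{y'}([u]^-)$.

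I expect the argument to be essentially bookkeeping, and the only minor subtlety is checking that $(v, w) \mapsto (\tau(v), \tau(w))$ actually sends $\E$ into $\E$ rather than just pairs of $L$-vertices to pairs. This is built into the learning graph framework because the construction of $\G$ is $S_n$-symmetric, so $\tau$ extends to an automorphism of $(\V, \E)$; alternatively, one can extend each $p_y$ by zero to non-edges, after which the summation is unaffected by any discrepancy between $\E$ and its image under $\tau$.
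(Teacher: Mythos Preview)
Your proof is correct and follows essentially the same approach as the paper: pick $\tau$ from transitivity, use that $\tau$ permutes the equivalence class $[u]$, and reindex the sum defining $p_y([u]^+)$ (respectively $[u]^-$) to obtain $p_{y'}([u]^+)$. Your explicit remark about $\tau$ sending $\E$ to $\E$ is a point the paper leaves implicit in its change of variables, so if anything you are slightly more careful.
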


\begin{proof}
Let $p_y, p_{y'}$ be two flows and $\tau \in S_n$ such that $p_y((u,v))=p_{y'}((\tau(u),\tau(v))$ for all $L$-edges $(u,v)$.
Then 
\begin{align*}
p_y([u]^+) &= \sum_{v \in [u]} \sum_{w: (v,w) \in \E} p_y((v,w)) \\
&= \sum_{v \in [u]} \sum_{w: (v,w) \in \E} p_{y'}( (\tau(v),\tau(w)) ) \\
&=\sum_{\tau^{-1}(v) \in [u]} \ \sum_{\tau^{-1}(w): (\tau^{-1}(v), \tau^{-1}(w)) \in \E} p_{y'}((v,w)    ) \\
&=\sum_{v \in [u]} \sum_{w: (v,w) \in \E} p_{y'}((v,w))
=p_{y'}([u]^+).
\end{align*}
The statement $p_y([u]^-)=p_{y'}([u]^-)$ follows exactly in the same way.
\end{proof}
The next lemma gives a sufficient condition for hypothesis~(2) of \lemref{e:simple_costg} to be satisfied.  
The partition of vertices in \lemref{e:simple_costg} will be taken according to the equivalence classes $[u]$.  
Note that unlike the previous lemmas in this section that only consider a stage of a learning graph, this lemma 
speaks about the learning graph in its entirety.  

\begin{lemma}\label{e:symmetry}
Consider a learning graph and a set of flows $\{p_y\}$ such that $S_n$ acts transitively on $\{p_y\}$.  
Suppose that for every $L$-vertex $u$ and flow $p_y$ such that $p_y(u^{-}) > 0$,
\begin{enumerate}
  \item 
   the flow from $u$ is uniformly directed to $g^+([u])$ many neighbors,
  \label{esh2}
  \item 
  for every $L$-vertex $w$, 
  the number of incoming edges with from $[w]$ to $u$ is $g^-([w], [u])$.  
  \label{esh3}
\end{enumerate}
Then for every $L$-vertex $u$ the flow entering $[u]$ is uniformly distributed over $W_{y,[u]} \subseteq [u]$ where 
  $|W_{y, [u]}|$ is independent of $y$.
\end{lemma}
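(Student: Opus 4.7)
My plan is to prove both conclusions — that $|W_{y, [u]}|$ is independent of $y$, and that the flow entering $[u]$ is uniform on $W_{y, [u]}$ — simultaneously by induction on the level $d$ of the $L$-vertex $u$. The base case is the root $r$: since $[r] = \{r\}$, we have $W_{y, [r]} = \{r\}$ for every $y$, and uniformity is trivial.

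For the inductive step with $u$ at level $d+1$, the independence of $|W_{y, [u]}|$ follows quickly from $S_n$-transitivity. Given $y, y'$, pick $\tau \in S_n$ with $p_y((a,b)) = p_{y'}((\tau(a), \tau(b)))$ for every $L$-edge $(a, b)$. Because $[u]$ is an $S_n$-orbit, $\tau$ permutes $[u]$, and summing the identity over in-edges yields $p_y(v^-) = p_{y'}(\tau(v)^-)$, so $\tau$ restricts to a bijection $W_{y, [u]} \to W_{y', [u]}$.

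For uniformity, I would fix $v \in W_{y, [u]}$ and expand $p_y(v^-) = \sum_{w} p_y((w, v))$ over incoming edges. Whenever the summand is positive, flow conservation forces $p_y(w^-) > 0$, so hypothesis~(1) gives $p_y((w, v)) = p_y(w^-)/g^+([w])$; the inductive hypothesis at level $d$ then provides $p_y(w^-) = \alpha_{y, [w]}$, a quantity depending only on $[w]$ and $y$. Hence
\[
p_y(v^-) \;=\; \sum_{[w']} \frac{\alpha_{y, [w']}}{g^+([w'])}\, N_y([w'], v),
\]
where $N_y([w'], v)$ counts the flow-carrying edges from $[w']$ into $v$. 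The uniformity claim reduces to showing $N_y([w'], v)$ is independent of the choice of $v \in W_{y, [u]}$.

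The hard part will be this last independence. Hypothesis~(2) pins down the \emph{structural} number of in-edges from $[w']$ to any $v$ as $g^-([w'], [u])$, and the $S_n$-action on the learning graph sends the in-neighborhood of $v$ to that of $\sigma(v)$ for any $\sigma \in S_n$, so the ambient combinatorial picture is already $S_n$-symmetric. To lift this structural symmetry to a symmetry of the flow, I would work with the stabilizer of $p_y$ inside $S_n$: combining the transitivity condition with the $S_n$-orbit structure of $[u]$ should show this stabilizer acts transitively on $W_{y, [u]}$, forcing $N_y([w'], v)$ — and therefore $p_y(v^-)$ — to be constant on $W_{y, [u]}$. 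If in a particular application the stabilizer is too small to act transitively on $W_{y, [u]}$ directly, a plausible fallback is to symmetrize $p_y$ by averaging over the $S_n$-translates that realize it in the transitivity hypothesis, then verify that the averaged flow preserves the induction invariants and still agrees with $p_y$ on the class-level totals $p_y([u]^-)$, whence uniformity for the original flow.
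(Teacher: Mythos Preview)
Your inductive framework and the expansion
\[
p_y(v^-) \;=\; \sum_{[w']} \frac{\alpha_{y,[w']}}{g^+([w'])}\, N_y([w'], v)
\]
are exactly what the paper does. The gap is in how you read hypothesis~(2). You treat it as a \emph{structural} count of in-edges and then try to force $N_y([w'],v)$ to be constant on $W_{y,[u]}$ via a stabilizer-transitivity argument or by averaging flows. But hypothesis~(2) is stated under the quantifier ``for every $L$-vertex $u$ and flow $p_y$ such that $p_y(u^-)>0$'', and the paper uses it (see the line ``By Hypothesis~\ref{esh3}, the number of nonzero terms is $g^-([u],[v])$ in both sums'') as a count of \emph{flow-carrying} in-edges. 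With that reading, $N_y([w'],v)=g^-([w'],[u])$ for every $v\in W_{y,[u]}$ by fiat, and your displayed formula immediately gives uniformity; no further argument is required.

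Your proposed workarounds are not just unnecessary but unlikely to succeed on their own. Transitivity of $S_n$ on $\{p_y\}$ only guarantees, for each pair $y,y'$, the existence of a single $\tau$ intertwining the two flows; it gives no lower bound on the stabilizer of a fixed $p_y$, which could easily fail to act transitively on $W_{y,[u]}$. The averaging fallback is too vague: averaging over permutations that realize $p_y$ as a translate of other $p_{y'}$ does not obviously produce a flow equal to $p_y$ (or even one satisfying the inductive hypotheses), so you cannot read off uniformity for $p_y$ from it. The correct fix is simply to use hypothesis~(2) as the paper intends.
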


\begin{proof}
We first use hypotheses (\ref{esh2}),(\ref{esh3}) of \lemref{e:symmetry} to show that for every flow $p_y$ and  for 
every $L$-vertex $u$, the incoming flow $p_y(u^{-})$ to $u$ is either $0$ or $\alpha_y([u]) >0$,
that is it depends only on the equivalence class of $u$. We then use transitivity and 
hypothesis~(\ref{esh3}) of \lemref{e:symmetry} to reach the conclusion 
of the lemma.

Let $V_t$ be the set of vertices at level $t$ and fix a flow $p_y$.  
The proof is then by induction on the level $t$ on a stronger statement for every $\sigma,\sigma'\in S_n$ and 
$L$-vertices $u\in V_{t}$ and $v,v'\in V_{t+1}$:
\begin{equation}\label{ind1}
[p_y((u,v))> 0 \text{ and } p_y((\sigma(u),v'))> 0] \implies p_y((u,v))=p_y((\sigma(u),v')),
\end{equation}
\begin{equation}\label{ind2}
[p_y(\sigma(u)^{-})> 0 \text{ and } p_y(\sigma'(u)^{-})> 0] \implies p_y(\sigma(u)^{-})=p_y(\sigma'(u)^{-}).
\end{equation}

At level $t=0$, the statement is correct since the root is unique, has incoming flow $1$, and outgoing edges with flow 
$0$ or $1/g^+(\mathrm{root})$.

Assume the statements hold up to and including level $t$.  
Hypothesis~\ref{esh2} implies that when $p_y((u,v)) >0$ for $u \in V_t$, 
it satisfies $p_y((u,v))=p_y(u^{-})/g^+([u])$, and similarly  
$p_y((\tau(u),v'))=p_y(\tau(u)^{-})/g^+([u])$. Therefore, Equation~\ref{ind2} at level $t$ implies Equation~\ref{ind1} at level $t+1$.

We now turn to Equation~\ref{ind2} at level $t+1$.  Fix $v\in V_{t+1}$ and $\sigma,\sigma'\in S$ such that $\sigma(v)$ 
and $\sigma'(v)$ have positive incoming flows. Then 
\[
p_y(\sigma(v)^{-})=\sum_{u\in V_{t}} p_y((u,\sigma(v))) \quad\text{ and }\quad p_y(\sigma'(v)^{-})=\sum_{u\in V_{t}} 
p_y((u,\sigma'(v))).
\]
We will show that $p_y(\sigma(v)^{-})=p_y(\sigma'(v)^{-})$ by proving the following equality for every $u$
\[
\sum_{\tau\in S_n} p_y((\tau(u),\sigma(v))) = \sum_{\tau\in S_n} p_y((\tau(u),\sigma'(v))).
\]

By Equation~\ref{ind1} at level $t$, all nonzero terms in the respective sum are identical. By Hypothesis~\ref{esh3}, the 
number of nonzero terms is $g^{-}([u],[v])$ in both sums. Therefore the two sums are identical.

We now have concluded that the incoming flow to an $L$-vertex  $u$ 
is either $0$ or $\alpha_y([u])>0$.  This implies 
that the flow entering $u$ is uniformly distributed over some set $W_{y, [u]} \subseteq [u]$.  We now show that 
the size of this set is independent of $y$.

If the flow is transitive then $\alpha_y([u])$ is independent of $y$ and furthermore by the second statement of \lemref{lem:trans_cons}
applied to the level of $u$,
$\sum_{v \in [u]} p_y(v^-) = \sum_{v \in [u]} p_{y'}(v^-)$.  Thus the number of terms in each sum must be the same 
and $|W_{y, [u]}|$ is independent of $y$.
\end{proof}

\section{Algorithms}\label{sec:algo}
We first discuss some basic assumptions about the subgraph $H$.  Say that $H$ has $k$ vertices and minimum 
degree $d$.  First, we assume that $d \ge 1$, that is $H$ has no disconnected vertices.  Recall that we are testing if
$G$ contains $H$ 
as a subgraph, not as an induced subgraph.  Thus if $H'$ is $H$ with disconnected vertices removed, then $G$ will 
contain $H$ if and only if $G$ contains $H'$ and $n \ge k$.  Furthermore, the algorithms we give in this 
section behave monotonically with $k$, and so will have smaller complexity on the graph $H'$.  Additionally, we 
assume that $k \ge 3$ as if $k=2$ and $d=1$ then $H$ is simply an edge and in the case the complexity is known to be $\Theta(n)$ as it is equivalent to search on $\Theta(n^2)$ items. 

Thus let $H$ be a graph on vertex set $\{1,2,\ldots, k\}$, with $k\geq 3$ vertices.  We present two algorithms in this 
section for determining if a graph $G$ contains $H$.  
Following Belovs, we 
say that a stage \emph{loads} an edge $\{a,b\}$ 
if for all $L$-edges $(u,v)$ with flow in the stage, we have $\{a,b\} \in S(v) \setminus S(u)$.  
Both algorithms will use a subroutine, given in \secref{sec:algo1},
to load an induced subgraph of $H$.  For some integer $1\leq  u\leq k$, let $H_{[1,u]}$ be the subgraph of $H$ induced 
by vertices $1,2,\ldots,u$.  The first algorithm, given in \secref{sec:loadH}, will take $u = k$ and load $H$ directly; 
the second algorithm, given in \secref{sec:algo2}, will first load $H_{[1, k-1]}$, and then search for the missing vertex 
that completes $H$.

\subsection{Loading a subgraph of $H$}\label{sec:algo1}
Fix $ 1 \le u \le k$ and and let $e_1, \ldots, e_m$ be the edges of $H_{[1,u]}$, enumerated in some fixed order.
We assume that $m\geq 1$.  
For any positive input graph $G$, that is a graph $G$ which contains a copy of $H$, 
we fix $k$ vertices $a_1,a_2,\ldots,a_k$ such that $\{a_i,a_j\}$ is an edge of $G$ whenever $\{i,j\}$ is an 
edge of $H$.

We define a bit of terminology that will be useful.  For two sets $Y_1, Y_2 \subseteq [n]$,
we say that a bipartite graph between $Y_1$ and $Y_2$ is of type
$(\{(n_1,d_1), \ldots, (n_j,d_j)\} , \{(m_1,g_1), \ldots, (m_\ell,g_\ell)\})$ if
$Y_1$ has $n_i$ vertices of degree $d_i$ for $i = 1 , \ldots , j$, and 
$Y_2$ has $m_i$ vertices of degree $g_i$ for $i = 1 , \ldots , \ell$, and this is a complete 
listing of vertices in the graph, i.e. $|Y_1|=\sum_{i=1}^j n_i$ 
and $|Y_2|=\sum_{i=1}^\ell m_i$.

Vertices of our learning graph will be labeled by a $u$-partite graph $Q$ on disjoint sets 
$X_1, \ldots, X_{u} \subseteq [n]$.
The global structure of $Q$ 
will mimic the edge pattern of $H_{[1,u]}$.  
Namely, for each edge $e_t=\{i, j\}$ of $H_{[1,u]}$, there will be a bipartite graph $Q_{t}$ 
between $X_i$ and $X_j$ with a specified degree sequence.  
There are no edges between $X_i$ and $X_j$ if $\{i, j\}$ is not an edge of $H_{[1,u]}$.  
The mapping $S: V \rightarrow 2^{n \choose 2}$ from learning graph vertices to query indices 
returns the union of the edges of $Q_t$ for $t=1, \ldots, m$. 

We now describe the stages of our first learning graph.  Let $V_t$ denote the $L$-vertices at the beginning 
of stage $t$ (and so the end of stage $t-1$ for $t>0$).  The $L$-edges 
between $V_t$ and $V_{t+1}$ are defined in the obvious way---there is an $L$-edge between
$v_t \in V_t$ and $v_{t+1} \in V_{t+1}$ if the graph labeling 
$v_t$ is a subgraph of the graph labeling $v_{t+1}$.
We initially set the weight of all $L$-edges to be one, though some edges will be reweighted in the 
complexity analysis using \lemref{e:simple_costg}.
The root of the learning graph is labeled by the 
empty graph.

The algorithm depends on two parameters $r,s$ which will be optimized later.  The parameter 
$r \in [n]$ will control the number of vertices, and $s \in [0,1]$ the edge 
density, of graphs labeling the $L$-vertices.

\paragraph*{Learning graph $\G_1$:}

\paragraph*{Stage $0$: Setup (Figure~\ref{fig:stage0}).} $V_1$ consists of all $L$-vertices labeled by a 
$u$-partite graph $Q$ with color classes $A_1, \ldots, A_{u} \subseteq [n]$, each of size $r-1$.  
The edges will be the union of the edges in bipartite graphs $Q_1, \ldots, Q_m$, where
if $e_\ell=\{i, j\}$ is an edge of $H_{[1,u]}$, then $Q_\ell$ is a bipartite graph of type
$(\{(r-1-rs,rs) , (rs,rs-1)\} , \{(r-1-rs,rs) , (rs,rs-1)\} )$ between $A_i$ and $A_j$. 
The number of edges added in this stage is $O(sr^2)$.  Flow is uniform from the root of the learning graph, 
whose label is the empty graph,
to all $L$-vertices such that $a_1, \ldots, a_k \not \in A_i$ for $i = 1, \ldots, u$.

\paragraph*{Stage $t$ for $ t = 1, \ldots, u$: Load $a_t$ (Figures~\ref{fig:stage1} and~\ref{fig:stage1b}).} $V_{t+1}$ consists of all $L$-vertices labeled by a $u$-partite graph $Q$ with 
color classes $B_1, \ldots, B_t$, $A_{t+1}, \ldots A_{u}$, where $|B_i| = r$, and $|A_i| = r-1$.
The edges of $Q$ are the union of edges of bipartite graphs $Q_1, \ldots, Q_m$, where if 
$e_\ell=\{i, j\}$ then the type of $Q_\ell$ is given by the following cases:
\begin{itemize}
\item If $t <i<j$, then $Q_\ell$ is of type $(\{(r-1-rs,rs) , (rs,rs-1)\} , \{(r-1-rs,rs) , (rs,rs-1)\} )$ between $A_i$ and $A_j$.
\item If $i \leq t <j$, then $Q_\ell$ is of type $(\{(r-rs,rs) , (rs,rs-1)\} , \{(r-1,rs) \} )$ between $B_i$ and $A_j$.
\item If $i < j \leq t$, then $Q_\ell$ is of type $(\{(r,rs) \} , \{(r,rs) \} )$ between $B_i$ and $B_j$.
\end{itemize}
The number of edges added at  stage $t$ is $O(rs)$.
The flow is directed uniformly on those $L$-edges where the element added to $A_t$ is $a_t$ and 
none of the edges $\{a_i,a_j\}$ are present. 

\paragraph*{Stage $u+1$: Hiding (Figure~\ref{fig:stage2}).} Now we are ready to start loading edges $\{a_i, a_j\}$.  
If we simply loaded the edge $\{a_i, a_j\}$ 
now, however, it would be uniquely identified by the degrees of $a_i, a_j$ since only these vertices would have 
degree $rs+1$.  This means that for example at the last stage of the learning graph
the vertex ratio would be $\Omega(n^{k-1})$, no matter
what $r$ is.
Thus in this stage we first do a ``hiding'' step, adding edges so that half of the vertices in every set have degree $rs+1$.  

Formally, $V_{u+2}$ consists of all $L$-vertices labeled by a $u$-partite graph $Q$ with 
color classes $B_1, \ldots, B_{u}$, where $|B_i| = r$.
The edges of $Q$ are the union of edges of bipartite graphs $Q_1, \ldots, Q_m$,
where if $e_\ell=\{i, j\}$ then $Q_\ell$ is of type 
$(\{(r/2,rs) , (r/2,rs+1)\} , \{(r/2,rs) , (r/2,rs+1) \} )$ between $B_i$ and $B_j$.
The number of edges added in this stage is $O(r)$.
The flow is directed uniformly to those $L$-vertices where for every $e_\ell=\{i, j\}$, 
both $a_i$ and $a_j$ have degree $rs$ in $Q_\ell$. 

\paragraph*{Stage $u+ t+1$ for $ t = 1, \ldots, m$: Load $\{a_i,a_j\}$ if $e_t= \{i,j\}$ (Figure~\ref{fig:stage2b}).} 
Take an $L$-vertex at the beginning of stage $u+t+1$ whose 
edges are the union of bipartite graphs $Q_1, \ldots, Q_m$.  In stage $u+t+1$ only $Q_t$ 
will be modified, by adding single edge $\{b_i,b_j\}$ where 
$b_i\in B_i$ and $b_j\in B_j$ have degree $rs$ in $Q_t$.  The flow is directed uniformly along those 
$L$-edges where $b_i=a_i$ and $b_j=a_j$.  

Thus at the end of stage $u+m+1$, the $L$-vertices are labeled by the edges in the union of bipartite graphs $Q_1, \ldots, Q_m$ each of type
$(\{(r/2-1,rs) , (r/2+1,rs+1)\} , \{(r/2-1,rs) , (r/2+1,rs+1) \} )$.  
The incoming flow is uniform over those $L$-vertices where $a_i \in B_i$ for $i= 1, \ldots, u$, and
if  $e_\ell=\{i, j\}$ then the edge $\{a_i, a_j\}$ is 
present in $Q_\ell$ for $\ell = 1, \ldots, m,$ and both $a_i, a_j$ have degree $rs+1$ in $Q_\ell$.

\begin{figure}[H]
\centerline{\includegraphics[width=300pt]{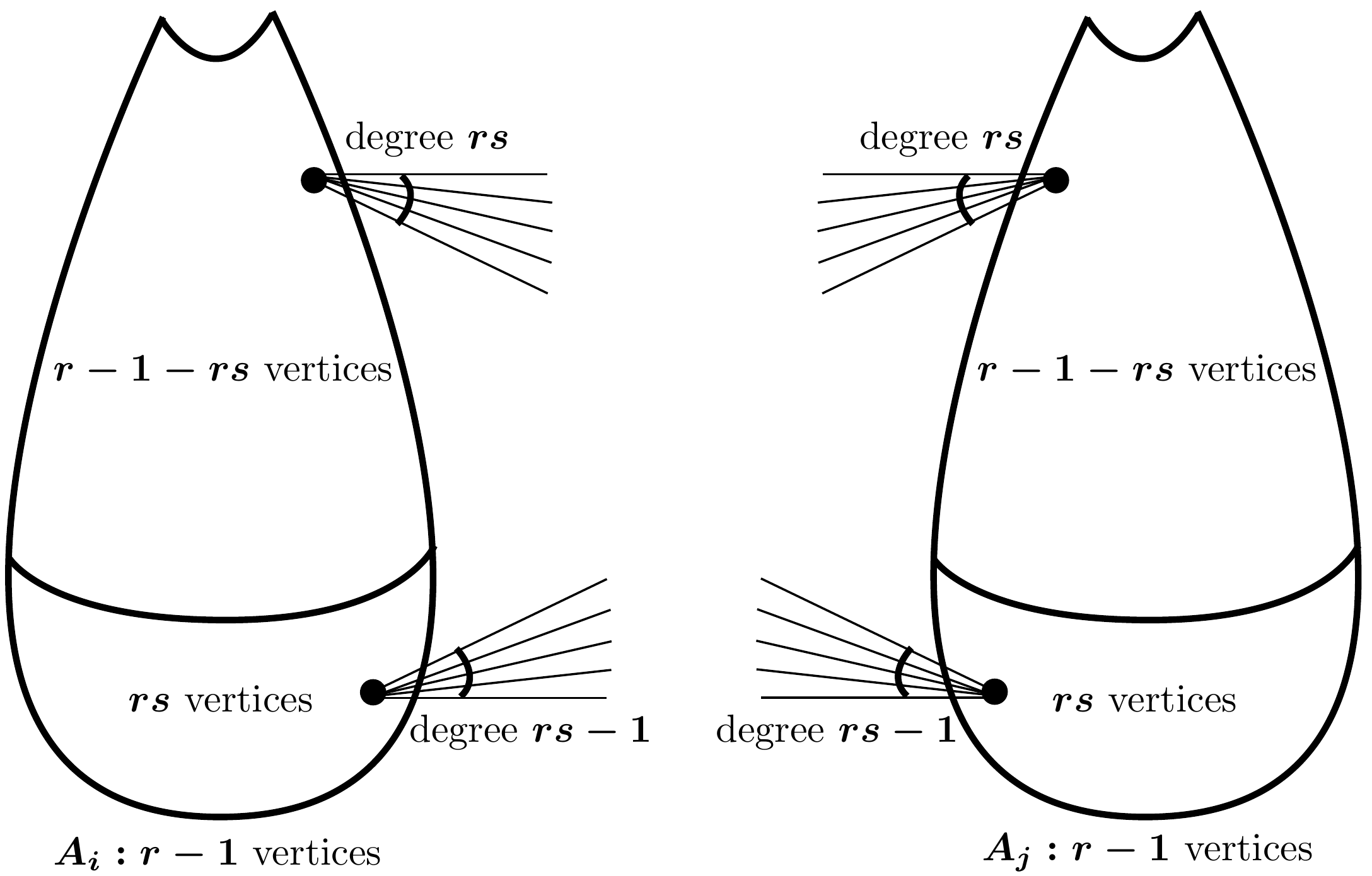}}
\caption{\textbf{Stage~$0$:} Edges added to $G_\ell$ when $e_\ell=\{i, j\}$ is an edge of $K$.
The flow is uniform to instances with  $a_1, \ldots, a_k \not \in A_i$ for $i = 1, \ldots, k-1$.
\label{fig:stage0}}
\end{figure}
\begin{figure}[H]
\centerline{\includegraphics[width=350pt]{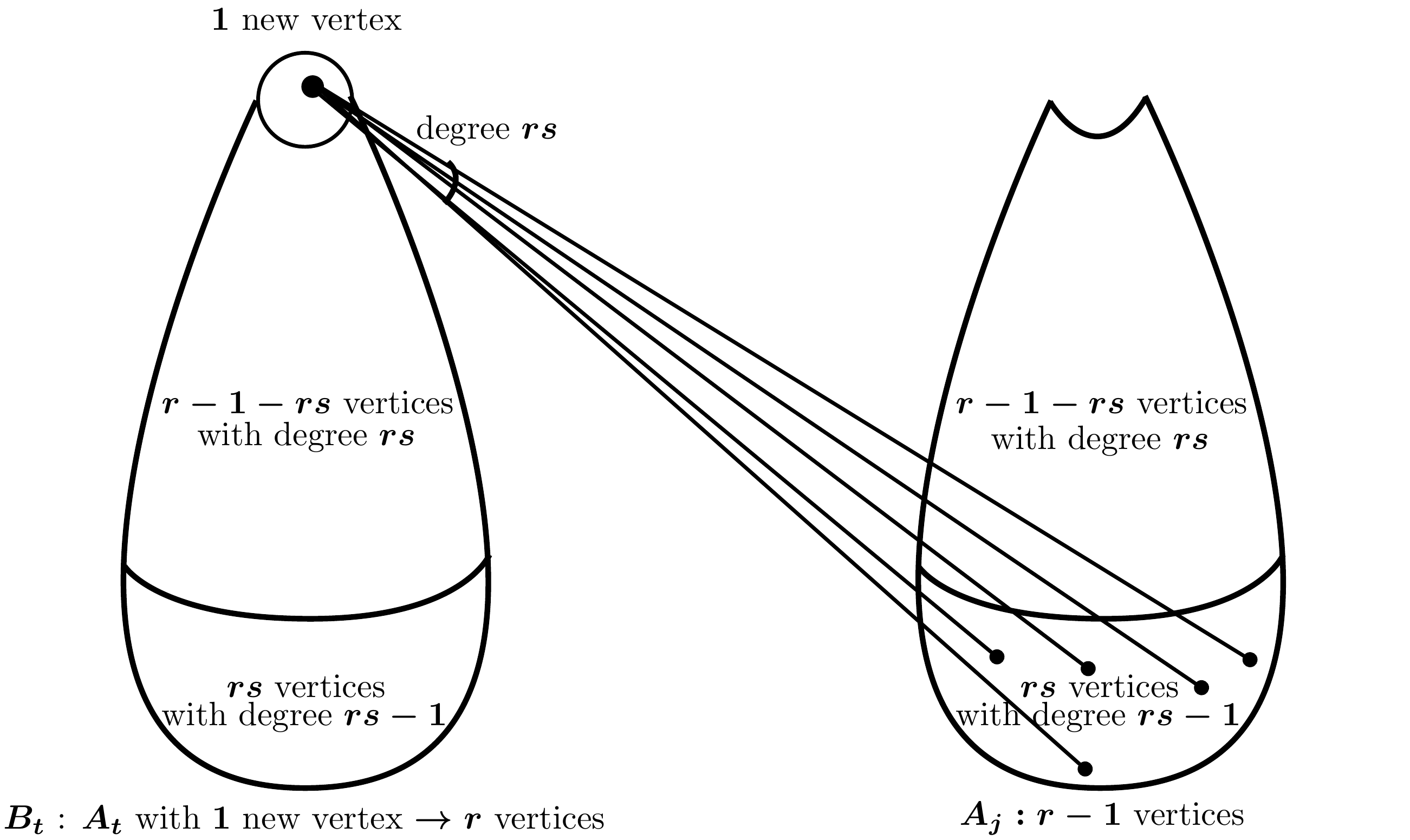}}
\caption{\textbf{Stage~$t$ for $ t = 1, \ldots, u$:} 
$rs$ added edges in some $G_\ell$ at stage~$t$,
when $e_\ell=\{t, j\}$ with $t<j$.
See Figure~\ref{fig:stage1b} for the case 
$e_\ell=\{i, t\}$ with $i<t$.
(No edge is added to $G_\ell$ at stage $t$ when $e_\ell=\{i, j\}$ with $t\neq i$ and $t\neq j$.)
The added edges are between the new vertex of $A_t$
and the $rs$ vertices in $A_j$, respectively $B_i$, of degree $(rs-1)$. 
The flow is directed to instances where the new vertex of $A_t$ is $a_t$.
\label{fig:stage1}}
\end{figure}
\begin{figure}[H]
\centerline{\includegraphics[width=350pt]{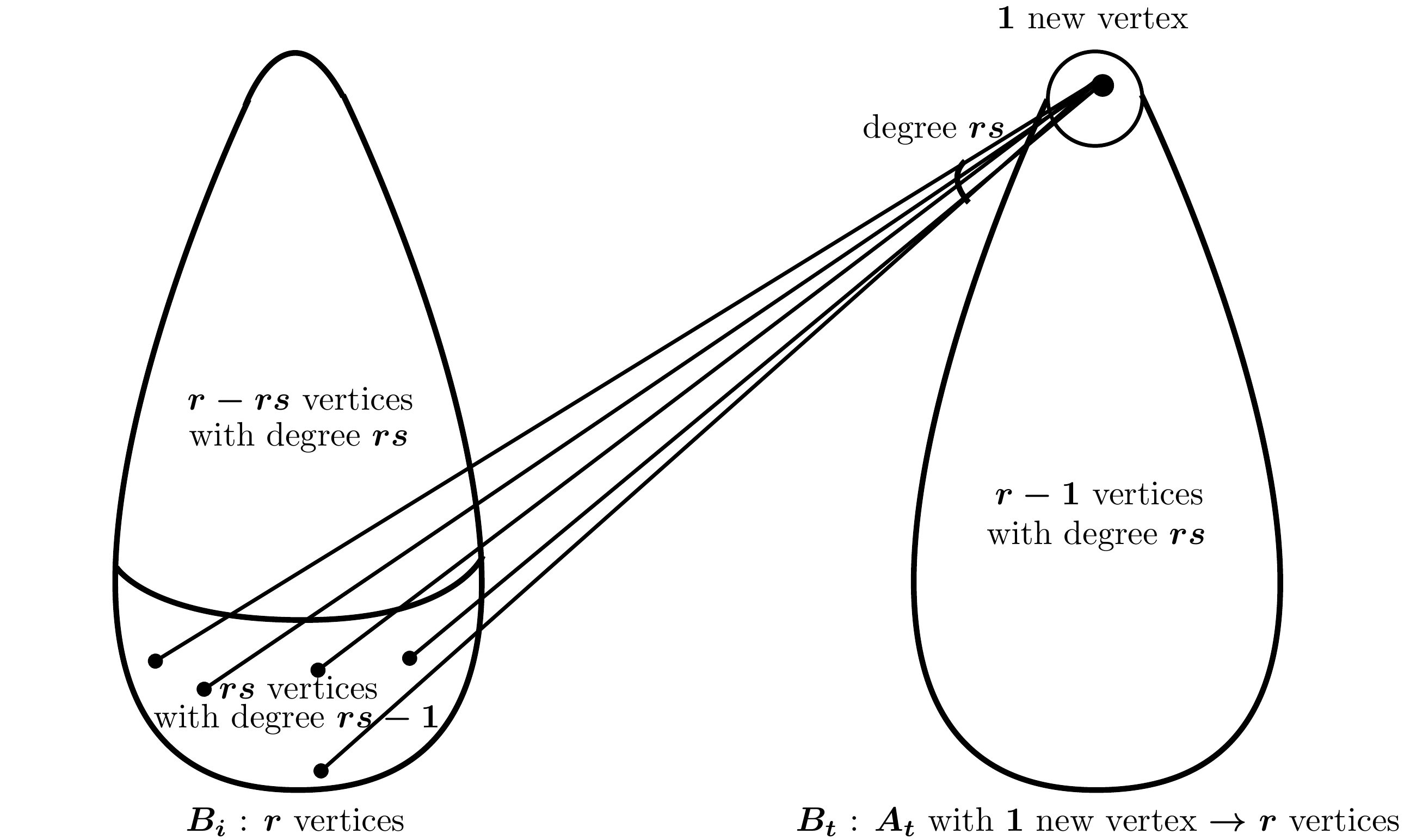}}
\caption{\textbf{Stage~$t$ for $ t = 1, \ldots, u$:} 
$rs$ added edges in some $G_\ell$ at stage~$t$,
when $e_\ell=\{i, t\}$ with $i<t$.
See Figure~\ref{fig:stage1} for the case $e_\ell=\{t, j\}$ with $t<j$.
(No edge is added to $G_\ell$ at stage $t$ when $e_\ell=\{i, j\}$ with $t\neq i$ and $t\neq j$.)
The added edges are between the new vertex of $A_t$
and the $rs$ vertices in $A_j$, respectively $B_i$, of degree $(rs-1)$. 
The flow is directed to instances where the new vertex of $A_t$ is $a_t$.
\label{fig:stage1b}}
\end{figure}
\begin{figure}[H]
\centerline{\includegraphics[width=300pt]{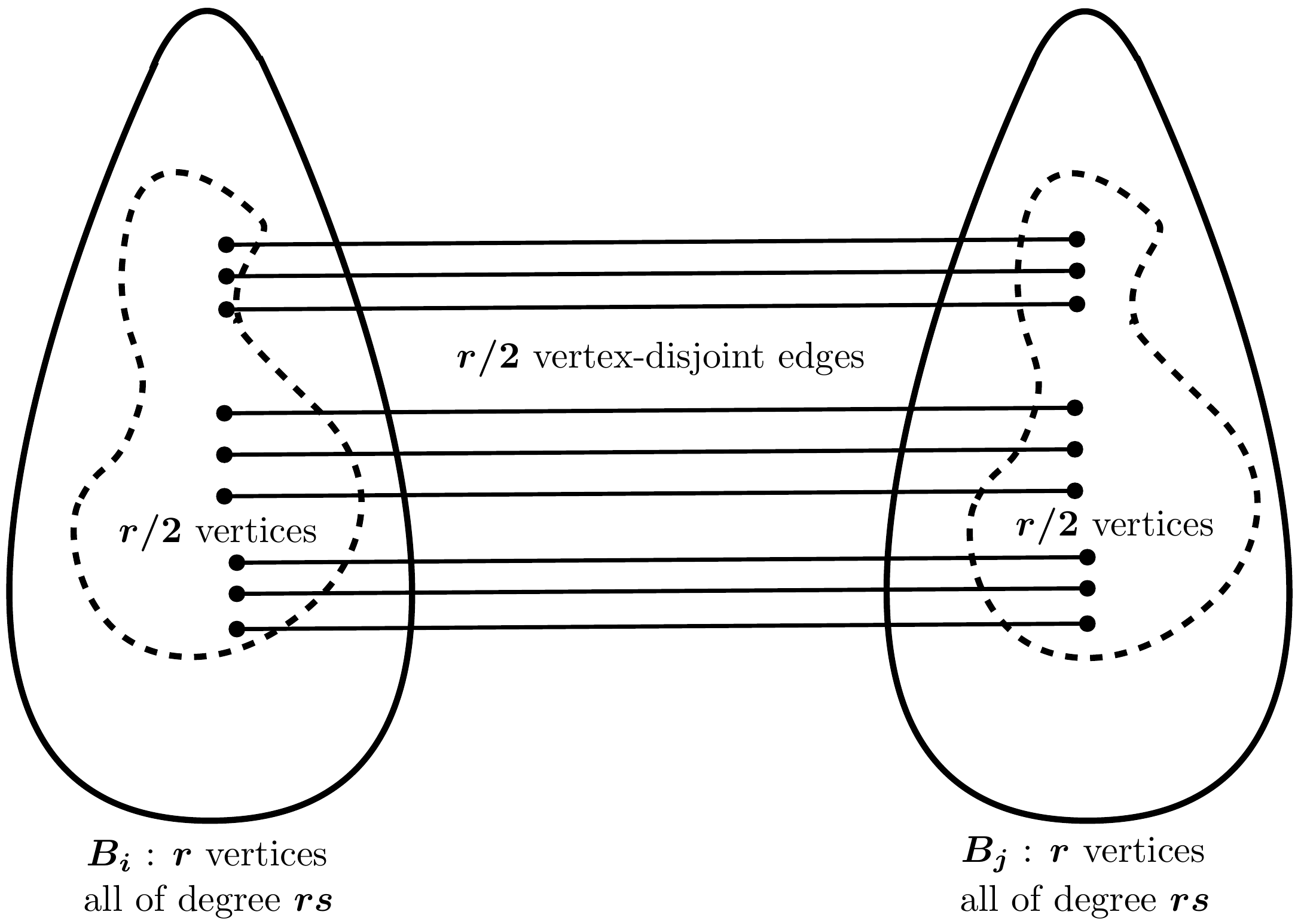}}
\caption{\textbf{Stage~$u+1$:} 
We add $r/2$  vertex-disjoint edges to $G_\ell$ when $e_\ell=\{i, j\}$ is an edge of $K$.
The flow is directed to instances where the degrees of $a_i$ and $a_j$ remain $rs$ in $G_\ell$.
\label{fig:stage2}}
\end{figure}
\begin{figure}[H]
\centerline{\includegraphics[width=300pt]{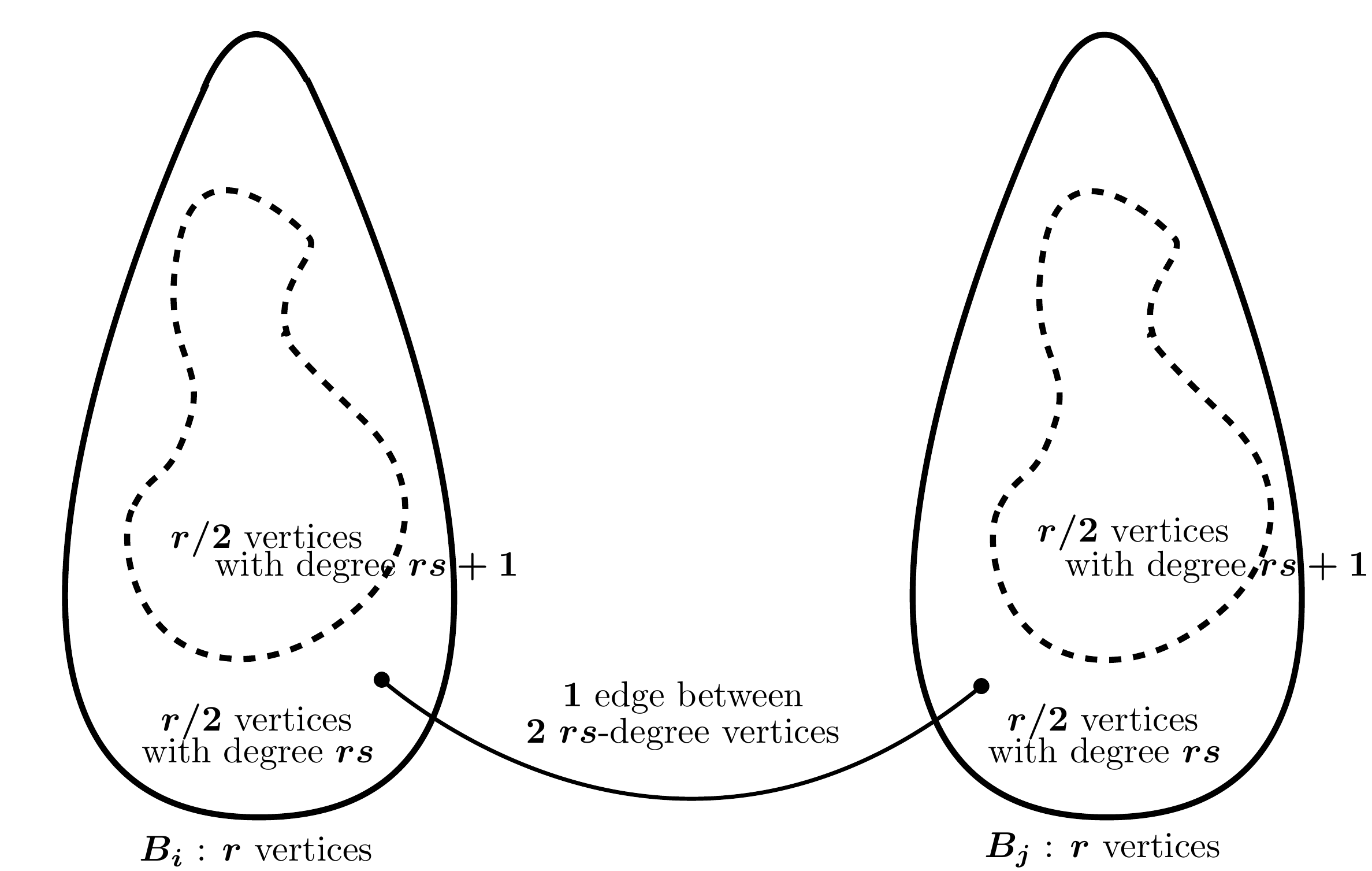}}
\caption{\textbf{Stage~$u+1+t$ for $ t = 1, \ldots, m$:}
Let $e_t=\{i,j\}$. Then a single edge is added to $Q_t$ between
two vertices $b_i\in B_i$ and $b_j\in B_j$ of degree $rs$ in $Q_t$.
The flow is directed to instances where $b_i=a_i$ and $b_j=a_j$.
\label{fig:stage2b}}
\end{figure}

\paragraph*{Complexity analysis of the stages}
Note that for an input graph $y$ containing a copy of $H_{[1,u]}$ the definition of 
flow depends only on the vertices $a_1, \ldots, a_u$ that span $H$.  As for any two graphs $y,y'$ containing $H$ 
there is a permutation $\tau$ mapping a copy of $H$ in $y$ to a copy of $H$ in $y'$ we see that $S_n$ acts 
transitively on flows.  

Furthermore, by construction of our learning graph, from a vertex $v$ with $p_y(v^-) > 0$, flow is directed uniformly 
to $g$ out of $d$ many neighbors, where $g,d$ depend only on the stage, not $y$ or $v$.  Additionally, by symmetry 
of the flow, hypothesis~(2) of \lemref{e:symmetry} is also satisfied.  We will invoke \lemref{e:simple_costg} to evaluate the cost of each stage.  Hypothesis~(1) is satisfied by \lemref{lem:trans_cons}, hypothesis~(2) by \lemref{e:symmetry}, 
and hypothesis~(3) by construction of the learning graph.

\begin{itemize}

\item Stage 0: 
The set of $L$-vertices at the beginning of 
this stage is simply the root thus the vertex ratio (and maximum vertex ratio) is one.  The degree ratio 
can be upper bounded by 
$((n-k)/(n-kr-k))^{k}=O(1)$, as we will choose $r=o(n)$ and $k$ is constant.  
The length of this stage is $O(sr^2)$ and so its complexity is $O(sr^2)$.

\item Stage $ t$ for $ t = 1, \ldots u$: 
An $L$-vertex in $V_t$ will be used by the flow if and only if $a_i \in B_i$ for $i = 1 , \ldots, t-1$ and 
$a_i \not \in B_1, \ldots, B_t, A_{t+1}, \ldots, A_u$ for $i=t, \ldots, k$.  For any vertex $v \in V_t$ the probability over 
$\sigma \in S_n$ that $\sigma(v)$ satisfies the second event is constant thus the vertex ratio is dominated by 
the first event which has probability $O((r/n)^{t-1})$.  Thus the maximum vertex ratio is $O((n/r)^{t-1})$.
The degree ratio is $n$.  Since $O(sr)$ edges are added, the complexity 
is $O(sr \sqrt{n} (n/r)^{(t-1)/2})$.

\item Stage $u+1$: 
As above, an $L$-vertex in $V_{k+1}$ will be used by the flow if and only if $a_i\in B_i$ for $i=1,\ldots, u$.  
For any vertex $v \in V_{k+1}$ the probability over $\sigma$ that this is satisfied by $\sigma(v)$ is $O((r/n)^{u})$
therefore the maximum vertex ratio is $O((n/r)^{u})$.
For each $e_\ell=\{i, j\}$, half of the vertices in $B_i$ and half of the vertices in $B_j$ will 
have degree $rs$ in $Q_\ell$.
Therefore, the degree ratio 
is $4^m=O(1)$.
Since $O(r)$ edges are added, the complexity of this stage is therefore $O(r (n/r)^{u/2})$.

\item Stage $u+t+1$ for $ t = 1, \ldots m$: 
In every stage, the degree ratio  is $O(r^2)$.
An $L$-vertex is in the flow at the beginning of stage $u+t+1$ if the following two 
conditions are satisfied:
\begin{align}
a_i &\in B_i \text{ for } i = 1 , \ldots u,  \label{e:condition1} \\
\text{ if } e_\ell=\{i, j\} \text{ then } \{a_i, a_j\} &\in Q_\ell \text{ with } a_i, a_j \text{ of degree } rs+1 \text{ in } Q_\ell, 
\text{ for } \ell=1, \ldots, t-1. \label{e:condition2}
\end{align} 
The probability over $\sigma$ that $\sigma(v)$ satisfies \eqnref{e:condition1} is $\Omega((r/n)^{u})$.  
Among vertices in $[v]$ satisfying this condition, a further $\Omega(s^{t-1})$ fraction will satisfy 
\eqnref{e:condition2}.  This follows from \lemref{l:uniform} below, together with the independence 
of the bipartite graphs $Q_1, \ldots, Q_m$.  Thus the maximum vertex ratio is
$O( (n/r)^{u} s^{-(t-1)})$.
As only one edge is added at this stage, 
we obtain a cost of $O( r (n/r)^{u/2} s^{-(t-1)/2})$.

\end{itemize}

\begin{lemma}
\label{l:uniform}
Let $Y_1,Y_2$ be disjoint $r$-element subsets of $[n]$,
and let $(y_1,y_2)\in Y_1\times Y_2$.
Let $K$ be a bipartite graph between $Y_1$ and $Y_2$
of type $(\{(r/2-1,rs) , (r/2+1,rs+1)\} , \{(r/2-1,rs) , (r/2+1,rs+1) \} )$.
The probability over $\sigma \in S_n$ that the edge $\{y_1,y_2\}$ is in $\sigma(K)$
and both $y_1$ and $y_2$ are of degree $rs+1$, is at least $s/4$.
\end{lemma}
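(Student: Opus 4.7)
My plan is to decompose the target probability into a degree condition and an edge condition via the substitution $z_i=\sigma^{-1}(y_i)$ for $i=1,2$. Let $A\subseteq Y_1$ and $B\subseteq Y_2$ denote the two sets of $r/2+1$ vertices of degree $rs+1$ in $K$. The event ``$\{y_1,y_2\}\in\sigma(K)$ with both endpoints of degree $rs+1$'' is then equivalent to ``$\{z_1,z_2\}$ is an edge of $K$ with $z_1\in A$ and $z_2\in B$,'' up to the $Y_1$--$Y_2$ symmetric orientation, which by the symmetry of the type contributes an equal amount.

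For the degree factor, I would exploit the uniformity of $\sigma\in S_n$: in the bipartite-preserving case (which by symmetry is the representative one) the pair $(z_1,z_2)$ is uniform over $Y_1\times Y_2$, so $\Pr[z_1\in A]=(r/2+1)/r\geq 1/2$ and independently $\Pr[z_2\in B]\geq 1/2$, giving a combined factor of at least $1/4$.

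For the edge factor, I would observe that conditional on the degree event the pair $(z_1,z_2)$ is uniform over $A\times B$, so the conditional probability of being an edge of $K$ equals $e_{AB}/(|A||B|)$, where $e_{AB}$ counts the edges of $K$ between $A$ and $B$. The central combinatorial step is then to show $e_{AB}/(|A||B|)\geq s$. I would attack this by a handshake-style double count: summing degrees over $A$ gives $|A|(rs+1)=e_{AB}+e(A,B^c)$, and since every vertex of $B^c$ has degree exactly $rs$, we have $e(A,B^c)\leq|B^c|\cdot rs$. Combining these yields $e_{AB}\geq|A|(rs+1)-|B^c|\cdot rs$, and a careful comparison of this quantity with $s\cdot|A||B|=s(r/2+1)^2$ --- exploiting the precise constants $|A|=r/2+1$ and $|B^c|=r/2-1$ of the type $(\{(r/2-1,rs),(r/2+1,rs+1)\},\{(r/2-1,rs),(r/2+1,rs+1)\})$ --- would give the desired density bound. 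Multiplying the two factors then yields the claimed $s/4$.

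The main obstacle will be the arithmetic of the last step: the raw handshake estimate must be squeezed just tightly enough to match $s\cdot|A||B|$, and this requires tracking the low-order corrections (the $\pm 1$ in $r/2\pm 1$ and the $+1$ in $rs+1$) with care. I also expect to have to be explicit about the symmetry reduction that lets me reduce from the full $S_n$-randomness to the uniform distribution on $Y_1\times Y_2$, so that the conditional probabilities really do combine into an unconditional bound of $s/4$.
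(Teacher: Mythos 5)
Your opening decomposition --- degree factor times conditional edge factor, using the substitution $z_i=\sigma^{-1}(y_i)$ and the uniformity of $(z_1,z_2)$ over $Y_1\times Y_2$ --- is a natural reading of the statement, and your degree factor of at least $1/4$ is fine. The gap is in the last combinatorial step, and it is not merely an arithmetic nuisance: the handshake estimate you propose does not prove what you need. Summing degrees over $A$ and discarding at most $|B^c|\cdot rs$ edges gives
\[
e_{AB}\;\ge\;|A|(rs+1)-|B^c|\,rs
\;=\;\Bigl(\tfrac{r}{2}+1\Bigr)(rs+1)-\Bigl(\tfrac{r}{2}-1\Bigr)rs
\;=\;2rs+\tfrac{r}{2}+1 ,
\]
which is only $O(r)$ when $s$ is bounded away from zero, whereas the target $s|A||B|=s(r/2+1)^2$ is $\Theta(sr^2)$. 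These two quantities only match when $s=O(1/r)$; in the regime where the lemma is actually applied, $s$ is much larger than $1/r$, so the handshake bound falls far short. Worse, the bound is tight: one can exhibit bipartite graphs of the prescribed type (e.g.\ when $s=1/2$, take the $rs$-regular part to be the disjoint union of the complete bipartite graphs on $A_0\times B_0^c$ and $A_0^c\times B_0$, then add the hiding matching and one more edge) for which $e_{AB}$ equals your handshake lower bound and is far below $s|A||B|$. So this route to the density inequality genuinely fails, rather than just requiring careful bookkeeping of the $\pm 1$ terms.

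The paper never tries to lower-bound $e_{AB}$ at all. After the same $1/4$ degree factor, it observes that $K$ contains a spanning $rs$-regular bipartite subgraph $K'$ of type $(\{(r,rs)\},\{(r,rs)\})$, reducing the edge condition to showing that a fixed edge slot lies in a uniformly random isomorphic copy of $K'$ with probability at least $s$. For a regular graph this probability is pinned down exactly by a global double-counting argument: enumerate the $c$ isomorphic copies $K_1,\ldots,K_c$ of $K'$ between $Y_1$ and $Y_2$, count the pairs $(e,h)$ with $e\in K_h$ two ways, get $c\cdot sr^2$ (each copy has $sr^2$ edges) and $r^2\cdot pc$ (by transitivity of the $S_n$-action each of the $r^2$ edge slots appears in the same number $pc$ of copies), hence $p=s$. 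This symmetry/double-count over all isomorphic copies is the mechanism that replaces your local handshake estimate, and it is the ingredient your proposal is missing.
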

\begin{proof}
The degree condition is satisfied with probability at least $1/4$.  Given that the degree 
condition is satisfied, it is enough to show that for a bipartite graph $K'$ of type 
$(\{(r,rs)\} , \{(r,rs) \} )$ the probability over $\sigma \in S_n$ that $\sigma(K')$ contains the fixed edge $(y_1,y_2)$ 
is at least $s$, since $K$ is such a graph plus some additional edges. 

Because of symmetry, this probability doesn't depend on the choice of 
the edge,
let's denote it by $p$. Let $K_1, \ldots, K_c$ be an enumeration of all bipartite graphs isomorphic to $K'$. We 
will count in two different ways the cardinality $\chi$ of the set
$\{ (e,h) : e \in K_h \}$. Every $K_h$ contains $sr^2$ edges, therefore $\chi = csr^2$. On the other hand, every edge
appears in $pc$ graphs, therefore $\chi = r^2pc$, and thus $p = s$.
\end{proof}

\subsection{Loading $H$}\label{sec:loadH}
When $u=k$, the constructed learning graph determines if $H$ is a subgraph of 
the input graph, since a copy of $H$ is loaded on positive instances.
Choosing the parameters $s,r$ to optimize the total cost gives the following theorem.

\begin{theorem}
\label{t:main1}
Let $H$ be a graph on $k\geq 3$ vertices and $m \ge 1$ edges.
Then there is a quantum query algorithm for 
determining if $H$ is a subgraph of an $n$-vertex graph making 
$O(n^{2-2/(k+1)-k/((k+1)(m+1))})$ many queries.
\end{theorem}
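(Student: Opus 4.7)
The plan is to instantiate the generic learning graph $\G_1$ constructed in Section~4.1 with the specialization $u = k$, so that on positive instances a full copy of $H$ is loaded by the end of the last stage; combine the per-stage complexities via \lemref{l:stages}; optimize the two free parameters $r$ and $s$; and finally invoke \thmref{t:lg_alg} to translate the learning graph complexity into a quantum query bound.

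First I would collect the stage-complexity bounds already derived at the end of Section~4.1, specialized to $u = k$. There are four families of stages: Stage~$0$ costs $O(sr^2)$; the vertex-loading Stage~$t$ for $1\le t\le k$ costs $O(sr\sqrt{n}\,(n/r)^{(t-1)/2})$; the hiding Stage~$k+1$ costs $O(r(n/r)^{k/2})$; and the edge-loading Stage~$k+1+t$ for $1\le t\le m$ costs $O(r(n/r)^{k/2} s^{-(t-1)/2})$. Assuming $r\le n$, the vertex-loading costs increase in $t$, and assuming $s\le 1$, the edge-loading costs increase in $t$ and dominate the hiding stage. Hence only three terms can set the asymptotics: Stage~$0$ with contribution $sr^2$, Stage~$k$ with contribution $sr^{(3-k)/2} n^{k/2}$, and Stage~$k{+}1{+}m$ with contribution $r^{1-k/2} n^{k/2} s^{-(m-1)/2}$. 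By \lemref{l:stages}, the total complexity is (up to constants) the maximum of these three.

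Next I would balance the three dominant terms. Equating Stage~$k$ with Stage~$k{+}1{+}m$ gives the relation $s^{(m+1)/2} r^{1/2} = 1$, and equating Stage~$0$ with Stage~$k{+}1{+}m$ gives $s^{(m+1)/2} r^{(k+2)/2} = n^{k/2}$. Eliminating $s$ from these two equations yields $r^{k+1} = n^k$, hence the optimal values $r = n^{k/(k+1)}$ and $s = n^{-k/((k+1)(m+1))}$. Substituting back into Stage~$0$, all three terms take the common value $n^{2k/(k+1) - k/((k+1)(m+1))} = n^{2 - 2/(k+1) - k/((k+1)(m+1))}$, which is the bound claimed in the theorem.

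The main step that requires care is not the optimization but justifying that the three chosen stages really do dominate at the optimum: one must check that the intermediate vertex-loading stages $2,\dots,k-1$ and the intermediate edge-loading stages $k{+}2,\dots,k{+}m$ are absorbed, using the monotonicity noted above, and that the hiding stage $k+1$ is dominated by stage $k{+}1{+}m$ once $s\le 1$ is known. One must also verify that the chosen values satisfy the implicit regularity assumptions of Section~4.1, namely $r = o(n)$, $s \in (0,1]$, and $rs \ge 1$ (so that the degree sequences used in the type-bookkeeping make sense); these hold because $k/(k+1) < 1$ and $k/((k+1)(m+1)) \le k/(k+1)$. Granted these routine checks, the stated bound is immediate.
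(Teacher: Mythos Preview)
Your proposal is correct and follows essentially the same approach as the paper: you instantiate $\G_1$ with $u=k$, reduce to the three dominant stage costs (setup, last vertex-load, last edge-load), and optimize $r$ and $s$ to obtain $r=n^{k/(k+1)}=n^{1-1/(k+1)}$ and $s=n^{-k/((k+1)(m+1))}$, exactly as the paper does. The only stylistic difference is that you solve the two balancing equations simultaneously whereas the paper first fixes $r$ by equating $S'=U'$ (which is $s$-independent) and then tunes $s$; your additional remark about verifying $rs\ge 1$ is a nice bit of care not made explicit in the paper.
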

\begin{proof}
By \thmref{t:lg_alg}, it suffices to show that the learning graph $\G_1$ has the 
claimed complexity.  We will use \lemref{l:stages} and upper bound the learning graph 
complexity by the sum of the costs of the stages.  As usual, we will ignore factors of $k$.  

The complexity of stage $0$ is:
\[
S' = O\left(sr^2\right) \enspace .
\]
The complexity of each stage $1, \ldots, k$, and also their sum, is 
dominated by the complexity of stage $k$: 
\[
U'=O\left(sr \sqrt{n} (n/r)^{(k-1)/2}\right) \enspace .
\]
The complexity of stage $k+1$ is:
\[
U''=O\left(r (n/r)^{k/2}\right) \enspace .
\]
Again, the complexity of each stage $k+2, \ldots, k+m+1$, and also their sum, 
is dominated by the complexity of stage $k+m+1$:
\[
U'''=O\left( r (n/r)^{k/2} s^{-(m-1)/2}\right) \enspace .
\]
Observe that $U''=O(U''')$.

Therefore the overall cost can be bounded by $S' + U' + U'''$.
Choosing $r=n^{1-1/(k+1)}$ makes $S'=U'$ for any value of $s$, as their  dependence on $s$
is the same.  When $s=1$ we have $U''' < S'=U'$ thus we can choose $s <1$ to balance all three 
terms. Letting $s=n^{-t}$ we have $S'=U'=O(n^{2-2/(k+1)-t})$ and
$U'''=O(n^{1+(k-2)/(2(k+1))+t(m-1)/2})$.  Making these equal gives 
$t=k/((k+1)(m+1))$, and gives overall cost $O(n^{2-2/(k+1)-t})$.  
\end{proof}

\subsection{Loading the full graph but one vertex}\label{sec:algo2}
Recall that $H$ is a graph on vertex set $\{1,2,\ldots, k\}$, with $k\geq 3$ vertices, $m \ge 1$ edges and minimum 
degree $d \ge 1$.  By renaming the vertices, if necessary, we assume that vertex $k$ has degree $d$.

Our second algorithm employs the learning graph
$\G_1$ of Section~\ref{sec:algo1} with $u=k-1$ to first load $H_{[1, k-1]}$.  This is then combined with search to 
find the missing vertex and a collision subroutine to verify it links with $H_{[1, k-1]}$ to form $H$.

Again, let $H_{[1,k-1]}$ be the subgraph of $H$ induced by vertices $1,2,\ldots,k-1$,
and let $e_1, \ldots, e_{m'}$ be the edges of $H_{[1,k-1]}$, enumerated in some fixed order.
Thus note that $m = m' + d$.
For any positive input graph $y$, 
we fix $k$ vertices $a_1,a_2,\ldots,a_k$ such that $\{a_i,a_j\}$ is an edge of $y$ whenever $\{i,j\}$ is an 
edge of $H$.  For notational convenience we assume that $a_k$ is of degree $d$ and connected to 
$a_1, \ldots, a_d$.

\paragraph*{Learning graph $\G_2$:}

\paragraph*{Stages $0,1,\ldots,k+m'$:} Learning graph $\G_1$ of Section~\ref{sec:algo1}.

\paragraph*{Stage $k+m'+1$:} We use search plus a $d$-wise collision subroutine to find a vertex 
$v$ and $d$ edges which link $v$ to $H_{[1, k-1]}$ to form $H$.  The learning graph for this subroutine is given in
\secref{sec:sub}.

\paragraph*{Complexity analysis of the stages}
All stages but the last one have been analyzed in \secref{sec:algo1}, therefore
only the last stage remains to study.
\begin{itemize}
\item Stage $k+m'+1$: Let $V_{k+m'+1}$ be the set of $L$-vertices at the beginning of stage $k+m'+1$.
We will evaluate the complexity of this stage in a similar fashion as we have done previously.  As $S_n$ 
acts transitively on the flows, by \lemref{lem:trans_cons} we can invoke \lemref{lem:conv_comb} and it suffices to 
consider the maximum of $C(E_{[u]}^\rightarrow)$ over equivalence classes $[u]$.  Furthermore, as we have 
argued in \secref{sec:algo1}, the learning graph also satisfies the conditions of \lemref{e:symmetry}, thus we can apply 
\lemref{e:simple_cost} to evaluate $C(E_{[u]}^\rightarrow)$.  The maximum vertex ratio over $[u]$ 
is $O(s^{-m'} (n/r)^{k-1})$.  As shown in \secref{sec:sub}, the complexity of the subroutine learning 
graph attached to each $v \in V_{k+m'+1}$ is at most $O(\sqrt{n} r^{d/(d+1)})$.  Thus by \lemref{e:simple_cost},
the complexity of this stage is  
\[
O\left(s^{-m'/2}\left(\frac{n}{r}\right)^{(k-1)/2} \sqrt{n} r^{d/(d+1)}\right) \enspace .
\]

\end{itemize}

Choosing the parameters $s,r$ to optimize the total cost gives the following theorem.
\begin{theorem}
\label{t:main2}
Let $H$ be a graph on $k\geq 3$ vertices with minimal degree $d\geq 1$ and $m$ edges.
Then there is a quantum query algorithm for 
determining if $H$ is a subgraph of an $n$-vertex graph making 
$O(n^{2-2/k-(2k-d-3)/(k(d+1)(m-d+2))})$ many queries.
\end{theorem}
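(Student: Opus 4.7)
The plan is to apply \thmref{t:lg_alg} and bound the query complexity by the learning graph complexity of $\G_2$, which by \lemref{l:stages} is dominated (up to a constant) by the sum of the complexities of its stages. Writing $r = n^\rho$ and $s = n^{-t}$, I would assemble the bounds established in \secref{sec:algo1} and the analysis of Stage $k+m'+1$ above into five quantities: the setup cost $S' = O(sr^2)$; the ``load vertex'' cost $U'$ (dominated by Stage $k-1$, i.e.\ $u = k-1$), giving $U' = O(sr\sqrt{n}(n/r)^{(k-2)/2})$; the hiding cost $U'' = O(r(n/r)^{(k-1)/2})$; the edge-loading cost $U''' = O(r(n/r)^{(k-1)/2} s^{-(m'-1)/2})$; and the subroutine cost $U'''' = O(s^{-m'/2}(n/r)^{(k-1)/2}\sqrt{n}\,r^{d/(d+1)})$, where $m' = m - d$.

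Next I would pick $r = n^{1-1/k}$, which is the standard choice and has the key virtue of forcing $S' = U' = O(n^{2-2/k-t})$ regardless of $s$; a short computation in the exponents $-t + 2(k-1)/k$ and $-t + \rho(4-k)/2 + (k-1)/2$ with $\rho = (k-1)/k$ confirms this. For the same $\rho$, the hiding exponent becomes $3(k-1)/(2k)$, which for $k \ge 3$ is strictly less than $2 - 2/k$, so $U''$ is absorbed in $S'$ provided $t$ is not too large; similarly $U''' = U'' \cdot n^{t(m'-1)/2}$ will need to be checked, but the choice of $t$ below will keep it bounded by the target since $U''' \le U''''$ by inspection.

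It then remains to balance $U''''$ against $S' = n^{2-2/k-t}$. Writing $U''''$ with $\rho = (k-1)/k$ yields an exponent of the form $tm'/2 + (2k-1)/(2k) + (k-1)d/(k(d+1))$, and setting this equal to $2-2/k-t$ gives the single equation
\[
\frac{t(m'+2)}{2} \;=\; 1 - \frac{2}{k} + \frac{1}{2k} - \frac{(k-1)d}{k(d+1)},
\]
which I would simplify by clearing the common denominator $2k(d+1)$ to obtain $t(m'+2) = (2k-d-3)/(k(d+1))$, hence $t = (2k-d-3)/(k(d+1)(m-d+2))$ after substituting $m' = m - d$. Since $k \ge 3$ and $d \le k-1$, one has $2k - d - 3 > 0$, so $t > 0$ and $s = n^{-t} \in (0,1)$ is a legitimate choice.

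The main obstacle is bookkeeping: five distinct stage complexities must all be kept under the common target $n^{2-2/k-t}$, and the exponents involve both $r$ and $s$ in competing ways. After the above choice of $r$, the two terms $S'$ and $U'$ are balanced automatically; the terms $U''$ and $U'''$ are comfortably dominated (the former because $3(k-1)/(2k) < 2 - 2/k$, the latter because $U''' = O(U'''')$ since the extra factor $\sqrt{n}\,r^{d/(d+1)}$ in $U''''$ exceeds the missing factor $\sqrt{s}\cdot r$); thus the real constraint is the balance between $S'$ and $U''''$, which delivers precisely the exponent claimed in the theorem.
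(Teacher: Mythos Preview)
Your proposal is correct and follows essentially the same argument as the paper: assemble the stage complexities of $\G_2$, set $r=n^{1-1/k}$ to equalize $S'$ and $U'$, observe that $U''$ and $U'''$ are dominated by the final subroutine cost (the paper writes this as $U''=O(U''')$ and $U'''=O(C')$ via the inequality $r^{1/(d+1)}s^{1/2}=O(n^{1/2})$, which is exactly your ``extra factor $\sqrt{n}\,r^{d/(d+1)}$ exceeds $\sqrt{s}\cdot r$'' remark), and then balance $S'$ against the subroutine cost to solve for $t$. The algebra and the resulting exponent match; only the notation differs (the paper calls your $U''''$ by $C'$).
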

\begin{proof}
By \thmref{t:lg_alg}, it suffices to show that the learning graph $\G_2$ has the 
claimed complexity.  We will use \lemref{l:stages} and upper bound the learning graph 
complexity by the sum of the costs of the stages.  As usual, we will ignore factors of $k$.  

The complexity of stage $0$ is:
\[
S' = O\left(sr^2\right) \enspace .
\]
The complexity of each stage $1, \ldots, k-1$, and also their sum, is 
dominated by the complexity of stage $k-1$: 
\[
U'=O\left(sr \sqrt{n} (n/r)^{(k-2)/2}\right) \enspace .
\]The complexity of stage $k$ is:
\[
U''=O\left(r (n/r)^{(k-1)/2}\right) \enspace .
\]
Again, the complexity of each stage $k+1, \ldots, k+m'$, and also their sum, 
is dominated by the complexity of stage $k+m'$:
\[
U'''=O\left( r (n/r)^{(k-1)/2} s^{-(m'-1)/2}\right) \enspace.
\]
Observe that $U''=O(U''')$.
Finally, denote the cost of stage $k+m'+1$ by
\[
C'=O\left(s^{-m'/2}\left(\frac{n}{r}\right)^{(k-1)/2} \sqrt{n} r^{d/(d+1)}\right) \enspace.
\]

Observe that $U'''=O(C')$, provided that $r^{1/(d+1)} s^{1/2} = O(n^{1/2})$.
The later is always satisfied since $ s\le1$, $r\le n$ and $d\ge 1$.
Therefore the overall cost can then be bounded by $S' + U' + C'$.
Choosing $r=n^{1-1/k}$ makes $S'=U'$ for any value of $s$, as their $s$ dependence 
is the same.  When $s=1$ we have $C' < S'=U'$ thus we can choose $s <1$ to balance all three 
terms. Letting $s=n^{-t}$ we have $S'=U'=O(n^{2-2/k-t})$ and
$C'=O(n^{2-2/k +1/(2k)-(k-1)/(k(d+1))+tm'/2})$.  Making these equal gives 
$t=(2k-d-3)/(k(d+1)(m'+2))$.  Since $k \ge 3$ we have $t > 0$ and thus $s < 1$.  
The overall cost of the algorithm is $O(n^{2-2/k-t})$.  Noting that $m=m'+d$ gives the statement of the theorem.
\end{proof}
Our main result is an immediate consequence of Theorem~\ref{t:main1} and Theorem~\ref{t:main2}.
\begin{theorem}
\label{t:main3}
Let $H$ be a graph on $k\geq 3$ vertices with minimal degree $d\geq 1$ and $m$ edges.
Then there is a quantum query algorithm for determining if $H$ is a subgraph of an $n$-vertex graph making 
$O(n^{2 - 2/k -t})$ many queries, where 
\[
t = \max\left\{\frac{k^2- 2(m+1)}{k(k+1)(m+1)}, \ \frac{2k - d - 3}{k(d+1)(m-d+2)} \right\}.
\]
\end{theorem}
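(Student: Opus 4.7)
The plan is to deduce this theorem immediately from \thmref{t:main1} and \thmref{t:main2}: each provides a learning-graph-based quantum query algorithm for subgraph containment, so running whichever is cheaper for the given $H$ yields a bound of $n^{2-2/k-\max(t_1,t_2)}$, where $t_1$ and $t_2$ are the improvements coming from the two algorithms. All that remains is to verify that these two improvements match the two expressions inside the maximum in the theorem statement.

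First I would rewrite the bound of \thmref{t:main1}, namely $O(n^{2-2/(k+1)-k/((k+1)(m+1))})$, in the canonical form $O(n^{2-2/k-t_1})$. Placing
\[
t_1 \;=\; \frac{2}{k+1} + \frac{k}{(k+1)(m+1)} - \frac{2}{k}
\]
over the common denominator $k(k+1)(m+1)$ gives
\[
t_1 \;=\; \frac{2k(m+1) + k^2 - 2(k+1)(m+1)}{k(k+1)(m+1)} \;=\; \frac{k^2 - 2(m+1)}{k(k+1)(m+1)},
\]
which is precisely the first expression in the maximum.

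Next I would read off $t_2$ from \thmref{t:main2}: that theorem states a bound of $O(n^{2-2/k-(2k-d-3)/(k(d+1)(m-d+2))})$, and since $m = m' + d$ where $m'$ is the number of edges of $H_{[1,k-1]}$, the denominator $(d+1)(m-d+2) = (d+1)(m'+2)$ is exactly the one appearing in the proof of \thmref{t:main2}. Hence $t_2 = (2k-d-3)/(k(d+1)(m-d+2))$ is the second expression in the maximum.

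Finally, for any fixed $H$ we select whichever of the two algorithms yields the smaller exponent. Since $\min\{n^{2-2/k-t_1}, n^{2-2/k-t_2}\} = n^{2-2/k-\max(t_1,t_2)}$, the theorem follows. There is no real obstacle here; the only non-trivial step is the algebraic rewriting of the \thmref{t:main1} bound into the normalized form $2 - 2/k - t_1$, which is purely bookkeeping.
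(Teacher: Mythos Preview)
Your proposal is correct and matches the paper's approach exactly: the paper simply states that \thmref{t:main3} is an immediate consequence of \thmref{t:main1} and \thmref{t:main2}. Your additional algebraic verification that the exponent $2-2/(k+1)-k/((k+1)(m+1))$ from \thmref{t:main1} equals $2-2/k-(k^2-2(m+1))/(k(k+1)(m+1))$ is correct and fills in the only detail the paper leaves implicit.
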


\subsection{Graph collision subroutine} \label{sec:sub}
In this section we describe a learning graph for the graph collision subroutine that is used in the 
learning graph given in \secref{sec:algo2}.  For each vertex $v$ at the end of stage $k+m'$ we will 
attach a learning graph $\G_v$.  The root of $\G_v$ will be the label of $v$ and we will show that it has 
complexity $\sqrt{n}r^{d/(d+1)}$.  Furthermore for every flow $p_y$ on $\G_v$, the sinks of flow will be $L$-vertices 
that have loaded a copy of $H$.  We now describe $\G_v$ in further detail.

A vertex $v$ at the end of stage $k+m'$ is labeled by a 
$(k-1)$-partite graph $Q$ on color classes $B_1, \ldots, B_{k-1}$ of size $r$.  The edges of $Q$ are the union of the 
edges in bipartite graphs $Q_1, \ldots, Q_{m'}$ each of type $(\{(r/2-1,rs) , (r/2+1,rs+1)\} , \{(r/2-1,rs) , (r/2+1,rs+1) \} )$.
This will be the label of the root of $\G_v$.  

On $\G_v$ we define a flow $p'_y$ for every input $y$ such that $p_y(v^-) > 0$ in the learning graph loading 
$H_{[1,k-1]}$.  Say that $y$ contains a copy of $H$ and that vertices $a_1, \ldots, a_k$ span $H$ in $y$.  
For ease of notation, assume that  vertex $a_k$ (the degree $d$ vertex removed from $H$) is connected to 
$a_1, \ldots, a_d$.  Recall that the $L$-vertex $v$ will have flow if and only if
$a_i \in B_i, a_k \not \in B_i$ for $i=1, \ldots, k-1$ and if $e_\ell=\{i,j\}$ then the edge $\{a_i, a_j\}$ is present in 
$Q_\ell$ for $\ell=1, \ldots, m'$, and both $a_i, a_j$ have degree $rs+1$ in $Q_\ell$.  Thus for each such 
$y$ we will define a flow on $\G_v$.  The flow will only depend on $a_1, \ldots, a_k$.   The complexity of $\G_v$ will 
depend on a parameter $1\leq \lambda \leq r$, that we will optimize later.

\paragraph*{Stage 0:} Choose a vertex $u \not \in B_i$ for $i=1, \ldots k-1$ and load $\lambda$
edges between $u$ and vertices of degree $rs+1$ in $B_i$, for each $i=1,\ldots d$.  Flow is directed uniformly 
along those L-edges where $u=a_k$ and none of the edges loaded touch any of the $a_1, \ldots, a_{d}$.  

\paragraph*{Stage $t$ for $t=1, \ldots, d$:} Load an additional edge between $u$ and 
$B_t$.  The flow is directed uniformly along those L-edges where the edge loaded is 
$\{a_k, a_t\}$.  

\paragraph*{Complexity analysis of the stages}
\begin{itemize}
  \item Stage 0: We use \lemref{e:simple_cost}.  As the vertices at the beginning of this stage consist only of 
  the root, conditions (1) and (2) are trivially satisfied; Condition~(3) is 
  satisfied by construction.  Flow is present in all $L$-edges of this stage where $u=a_k$, which is a 
  $\Omega(1/n)$ fraction of the total number of $L$-edges.  Thus the degree ratio $d/g=O(n)$.  The length of the 
  stage is $\lambda$, giving a total cost of $\lambda \sqrt{n}$.
  \item Stage $t$ for $t=1, \ldots, d$: Let $V_t$ be the set of vertices at the 
  beginning of stage~$t$.  The definition of flow depends only on $a_1, \ldots, a_k$, thus $S_n$ acts transitively on 
  the flows.  Applying \lemref{lem:trans_cons} gives that $\{p'(y)\}$ is consistent with $[u]^+$ for $u \in V_t$.  Also by 
  construction the hypothesis of \lemref{e:symmetry} is satisfied, thus we are in position to use \lemref{e:simple_costg}.
  
  The length of each stage is $1$.  The out-degree of an $L$-vertex in stage $t$ is $O(r)$ while the flow uses just 
  one outgoing edge, thus the degree ratio $d/g=O(r)$.  Finally, we must estimate the fraction of vertices in $[u]$ 
  with flow for $u \in V_t$.  A vertex $u$ in $V_t$ has flow if and only if $a_k$ was loaded in stage $0$ and the 
  edges $\{a_k, a_i\}$ are loaded for $i=1, \ldots, t-1$.  The probability over $\sigma \in S_n$ that the first event holds
  in $\sigma(u)$ is $\Omega(1/n)$.  Given that $a_k$ has been loaded at vertex $u \in V_t$ the probability over 
  $\sigma$ that $\{a_k, a_i\} \in \sigma(u)$ is $\Omega(\lambda/r)$.  Thus we obtain that the maximum vertex ratio 
  at stage $t$ is $n (r/\lambda)^{t-1}$.  The complexity of stage $t$ is maximized when $t=d$, giving an overall 
  complexity $\sqrt{nr}(r/\lambda)^{(t-1)/2}$.
 \end{itemize}

The sum of the costs $\lambda \sqrt{n}$ and $\sqrt{nr}(r/\lambda)^{(t-1)/2}$ is minimized for $\lambda=r^{d/(d+1)}$
giving a cost of $O(\sqrt{n} r^{d/(d+1)})$.

\subsection{Comparison with the quantum walk approach}
It is insightful to compare the cost of the learning graph algorithm for finding a subgraph 
with the the algorithm of \cite{MSS07} using a quantum walk on the Johnson graph.  We saw 
in the analysis of the learning graph that there were three important terms in the cost, 
denoted $S',U',C'$.  In the quantum 
walk formalism there are also three types of costs: setup, 
aggregated update, and aggregated checking, which 
we will denote by $S,U,C$.  
When the walk is done on the 
Johnson graph with vertices labeled by $r$-element subsets these costs are
\begin{align*}
S&=r^2 \\
U&= \left(\frac{n}{r}\right)^{(k-1)/2} r^{3/2} \\
C&=\left(\frac{n}{r}\right)^{(k-1)/2} \sqrt{n} r^{d/(d+1)}.
\end{align*}
Here $d$ is the minimal degree of a vertex in $H$. 

Here there is only one parameter, and in general $r$ cannot be chosen to make all three terms equal.  In the case of triangle 
finding ($k=3, d=2$), the choice $r=n^{3/5}$ is made.  
This makes $S=n^{1.2}$ and $U=C=n^{1.3}$.  In the general case of finding $H$, the choice 
$r=n^{1-1/k}$ is made, giving the first and second terms equal to $n^{2-2/k}$ and the third term 
$C=n^{2-1/k(1+k/(d+1)+(d-1)/2(d+1))}$.  Thus $C < S=U$ even for the largest possible value 
$d=k-1$.  Because of this, the analysis gives $n^{2-2/k}$ queries for any graph on $k$ vertices, 
independent of $d$.   


\section*{Acknowledgments}
We thank Aleksandrs Belovs for pointing out an error in the original version of the algorithm. We are very grateful to L\'aszl\'o Babai
whose insightful remarks made us realize the importance of distinguishing the two algorithms presented here. 
We also thank the anonymous referees for their helpful comments for improving the presentation of the paper.

\bibliographystyle{alpha}   
\bibliography{clique}

\end{document}